\newcommand*{\myDots}{\ifmmode\mathellipsis\else.\kern-0.05em.\kern-0.05em.\fi}
\tikzset{
block/.style = {draw, fill=white, rectangle, minimum height=2.5em, minimum width=3em},
tmp/.style = {coordinate},
sum/.style= {draw, fill=white, circle, node distance=1cm},
input/.style = {coordinate},
output/.style= {coordinate},
pinstyle/.style = {pin edge={to-,thin,black}}}
\newtheorem{theorem}{Theorem}
\theoremstyle{definition}
\newtheorem{assumption}{A.}
\newtheorem{remark}{Remark}
\theoremstyle{plain}
\theoremstyle{definition}
\theoremstyle{remark}
\begin{document}

\title{Decoupled Reference Governors: A Constraint Management Technique for MIMO Systems }

 \author{
\name{Yudan Liu\textsuperscript{a}\thanks{CONTACT Yudan Liu. Email: yliu38@uvm.edu}, 
Joycer Osorio\textsuperscript{b}\thanks{CONTACT Joycer Osorio. Email: Joycer.Osorio@uvm.edu},
Hamid R. Ossareh\textsuperscript{c}\thanks{CONTACT Hamid R. Ossareh. Email: Hamid.Ossareh@uvm.edu}}
\affil{\textsuperscript{a,b,c}Department of Electrical and Biomedical Engineering, University Of Vermont, Burlington, VT USA}
}

\maketitle

\begin{abstract}
This paper presents a computationally efficient solution for constraint management of  multi-input and multi-output (MIMO) systems.  The solution, referred to as the Decoupled Reference Governor (DRG), maintains the highly-attractive computational features of Scalar Reference Governors (SRG) while having performance comparable to Vector Reference Governors (VRG). DRG is based on decoupling the input-output dynamics of the system, followed by the deployment of a bank of SRGs for each decoupled channel. We present two formulations of DRG: DRG-tf, which is based on system decoupling using transfer functions, and DRG-ss, which is built on state feedback decoupling.  A detailed set-theoretic analysis of DRG, which highlights its main characteristics, is presented. We also show a quantitative comparison  between DRG and the VRG to illustrate the computational advantages of DRG. The robustness of this approach to disturbances and uncertainties is also investigated.
\end{abstract}

\begin{keywords}
Constraint management; Reference governors; Maximal admissible set; System decoupling; MIMO systems
\end{keywords}

\section{Introduction}\label{sec: intro}

Control and constraint management of systems with multiple inputs and multiple outputs (i.e., MIMO systems)  have been studied in the field of controls for many decades. The control of MIMO systems has  been the focus of many works in the literature, for example the Linear Quadratic Regulator (LQR), 
state feedback control methods,
sliding mode control,  
 $\mathcal{H}_2$ and $\mathcal{H}_\infty$ control, and decentralized and centralized control methods, please see \cite{Zhou_1996, Ge_2014, Garelli_2006, Burl_1998, Skogestad_2007} and the references therein. The problem of constraint management of  MIMO systems  has been explored as well. One route is to first find a suitable compensator to decouple the input-output dynamics, see \cite{Skogestad_2007, MacFarlane_1970,Macfarlane_1983}.
Afterwards, a diagonal controller for the newly decoupled plant is designed. The constraint management part is handled by nonlinear functions (e.g., saturation functions) that maintain the constrained signal within the desired bounds (\cite{Aastrom_1995}). However, this approach can compromise the closed-loop stability and may not enforce state constraints.  Another approach is Model Predictive Control (MPC), see \cite{Shah_2011,Bemporad_2002}, which addresses both tracking and  constraint management at the same time. This approach for constraint management in MIMO systems is explored in works like  \cite{Elliott_2013}, where  decentralized MPC strategies are proposed. 
Other MPC solutions are centralized (\cite{Wang_2003}), distributed (\cite{Camponogara_2002}), and cascade or hierarchical strategies (\cite{Scattolini_2009}). However, MPC tends to be computationally demanding, which has limited its applicability, especially for systems with fast dynamics and/or high order. Theoretical guarantees such as stability are also difficult to obtain in practice. Other approaches to solve constraint management are $l_1$-optimal control, see \cite{mcdonald1991ℓ1}, barrier Lyapunov function, see \cite{tee2009barrier}, and constrained LQR, see \cite{scokaert1998constrained}.

A relatively new  constraint management technique, which can be designed independently of the tracking controller and  alleviates the above shortcomings of MPC, is the Reference Governor (\cite{Kolmanovsky_2014,GARONE2017306}), also referred to as the  Scalar Reference Governor (SRG). It is  an add-on scheme for enforcing pointwise-in-time state and control constraints by modifying, whenever  required, the reference to a well-designed stable closed-loop system. A block diagram of SRG is shown in Figure~\ref{fig: Governor scheme}, where $y(t)$ is the constrained output, $r(t)$ is the reference, $v(t)$ is the governed reference, and $x(t)$ is the system state (measured or estimated). 
To compute $v(t)$, SRG employs the so-called maximal admissible set (MAS) (\cite{Gilbert_1991}),  which is defined as the set of all inputs and initial conditions that are constraint-admissible. By solving a simple linear program over this set, SRG selects a $v(t)$ that is as close as possible to $r(t)$ such that the constraints are satisfied for all time.
\begin{figure}
\centering
\begin{tikzpicture}[auto, node distance=1.5cm,>=latex']
\node [input, name=rinput] (rinput) {};
\node [block,minimum size=1.5cm, right of=rinput,text width=1.5cm,align=center] (controller) {Reference \hbox{Governor}};
\node [block,minimum size=1.5cm, right of=controller,node distance=3cm,text width=2cm,align=center] (system)
{\hbox{Closed-Loop} Plant};
\node [output, right of=system, node distance=2cm] (output) {};
\node [tmp, below of=controller,node distance=1.2cm] (tmp1){$s$};
\draw [->] (rinput) -- node{\hspace{-0.4cm}$r(t)$} (controller);

\draw [->] (controller) -- node [name=v]{$v(t)$}(system);
\draw [->] (system) -- node [name=y] {$y(t)$}(output);
\draw [->] (system) |- (tmp1)-| node[pos=0.75] {$x(t)$} (controller);
\end{tikzpicture}
\caption{Scalar reference governor block diagram} \label{fig:RGblock}
 \label{fig: Governor scheme}
\end{figure}
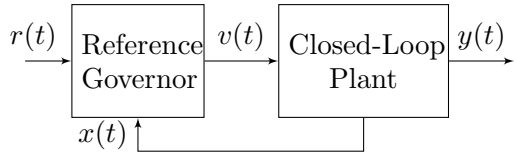

Standard SRG uses a single decision variable in the linear program to simultaneously govern all the channels of a MIMO system. As a result, it tends to have a conservative response. A modification of the SRG, which performs well in MIMO systems, is the so-called Vector Reference Governor (VRG), see \cite{GARONE2017306}. This technique handles constraint management by solving a quadratic program (QP) with multiple decision variables (one for each reference input). Even though VRG shares some properties with SRG, its implementation demands a higher computational load in comparison with SRG. This is because of the QP with multiple decision variables that must be solved at each time step, either by implicit methods or  multi-parametric explicit methods. In this paper, we  present a new reference governor solution for MIMO systems that maintains the computational simplicity of the SRG, but with performance similar to VRG. The solution, referred to as the Decoupled Reference Governor (DRG), is based on decoupling the input-output dynamics of the system, followed by the deployment of a bank of SRGs for each decoupled channel. Since the decoupling operation can be performed in both transfer function and state-space domains, we investigate two DRG formulations: DRG-tf, and DRG-ss, as summarized next.

 The block diagram of the DRG-tf method is shown in Figure~\ref{fig:Decoupled with RG}, where $G(z)$ is the closed-loop system with inputs $u_i$ and constrained outputs $y_i$. In this block diagram, we have assumed that the system is square, i.e., it has $m$ inputs and $m$ outputs, because the decoupling operation can only be applied to square systems. We will extend the theory to non-square systems in Section \ref{Sec: nonsquare system}, but for the ease of illustration, assume for now that $G(z)$ is a square system. Over the output, the constraints are imposed: $y_i(t) \in \mathbb{Y}_i, \forall t$, where $\mathbb{Y}_i$ are specified sets. Given the set-points $r_i$, the goal is to select each $u_i$ as close as possible to $r_i$ (to ensure that the tracking outputs, which are not shown in the figure, follow $r_i(t)$ as closely as possible) while ensuring that the output constraints are satisfied, i.e., $y_i(t) \in \mathbb{Y}_i, \forall t$. The DRG-tf method achieves these goals  as follows: first, system $G(z)$ is decoupled by finding a suitable filter, $F(z)$, that eliminates the coupling dynamics of $G(z)$. The resulting decoupled system is $W(z):= G(z) F(z)$, which is diagonal; that is, each output $y_i$ depends only  on the new input $v_i$. 
Second, we introduce a bank of $m$ decoupled SRGs, where the goal of the $i$-th SRG is to select $v_i$ as close as possible to $r_i'$ while ensuring $y_i \in \mathbb{Y}_i$. Each  SRG$_i$ (see Figure~\ref{fig: drg-tf system block}) uses only the states of the $i$-th decoupled subsystem. Finally, since we would like to ensure that $u_i = r_i$ when $r_i$ is constraint-admissible, we introduce the inverse of the filter, $F^{-1}(z)$, to cancel the effects of $F(z)$. Note that $F^{-1}(z)$ also ensures that $u_i$ and $r_i$ are close if $r_i$ is not constraint-admissible.

Similar to DRG-tf, DRG-ss is based on decoupling the input-output dynamics as shown in Figure~\ref{fig: system block ss}. The difference is that the system $G$ is decoupled by using state feedback, where the feedback matrices $\Phi$ and $\Gamma$ are properly chosen as will be discussed later in this paper. Second step is introducing $m$ decoupled SRGs, whose goal is the same as the SRGs in DRG-tf. Finally, to make sure that $u_i = r_i$ when $r_i$ is constraint-admissible, $x$ is fed back through  $\Gamma^{-1}(r-\Phi x)$.

\begin{figure}
\centering
\begin{tikzpicture}
    [L1Node/.style={rectangle,draw=black,minimum size=5mm},
    L2Node/.style={rectangle,draw=black, minimum size=13mm},
    L3Node/.style={rectangle,draw=black, minimum size=30mm}]
      \node[L2Node] (n3) at (1.4, 0){$F^{-1}(z)$};
      \node[L1Node] (n4) at (3.3, 0.5){$SRG_{1}$};
      \draw (3.2,0.1)node [color=black,font=\fontsize{10}{10}\selectfont]{\vdots};
      \draw (2.4,0.1)node [color=black,font=\fontsize{10}{10}\selectfont]{\vdots};
      \draw (4.2,0.1)node [color=black,font=\fontsize{10}{10}\selectfont]{\vdots};
      \draw (6.3,0.1)node [color=black,font=\fontsize{10}{10}\selectfont]{\vdots};
      \draw (8.3,0.1)node [color=black,font=\fontsize{10}{10}\selectfont]{\vdots};
      \draw (0.2,0.1)node [color=black,font=\fontsize{10}{10}\selectfont]{\vdots};
      \node[L1Node] (n4) at (3.3, -0.5){$SRG_{m}$};
      \node[L2Node] (n5) at (5.2, 0){$F(z)$};
      \node[L2Node] (n6) at (7.2, 0){$G(z)$};
	   \draw[dashed](4.4,1)-- (7.9,1);
      \draw[dashed](4.4,1)-- (4.4,-1);
      \draw[dashed](4.4,-1)-- (7.9,-1);
      \draw[dashed](7.9,-1)-- (7.9,1);
      \draw[->](0,0.5)--(0.65,0.5);
      \draw(6,1)--(6,1.2);
      \draw(6,1.2)--(3.2,1.2);
      \draw[->](3.2,1.2)--(3.2,0.8);
      \draw(6,-1)--(6,-1.2);
      \draw(6,-1.2)--(3.2,-1.2);
      \draw[->](3.2,-1.2)--(3.2,-0.8);

      \draw[->](0,-0.5)--(0.65,-0.5);
      \draw[->](2.1,0.5)--(2.6,0.5);
      \draw[->](2.1,-0.5)--(2.55,-0.5);
      \draw[->](3.95,0.5)--(4.5,0.5);
      \draw[->](3.99,-0.5)--(4.5,-0.5);       
      \draw[->](5.85,0.5)--(6.5,0.5);
      \draw[->](5.85,-0.5)--(6.5,-0.5);
      \draw[->](7.83,0.5)--(8.4,0.5);
      \draw[->](7.83,-0.5)--(8.4,-0.5);
      \draw (0,0.75)node [color=black,font=\fontsize{6}{6}\selectfont]{$r_{1}$};
      \draw (0,-0.75)node [color=black,font=\fontsize{6}{6}\selectfont]{$r_{m}$};
      \draw (2.4,0.75)node [color=black,font=\fontsize{6}{6}\selectfont]{$r_{1}'$};
      \draw (2.4,-0.75)node [color=black,font=\fontsize{6}{6}\selectfont]{$r_{m}'$};
        \draw (4.2,0.75)node [color=black,font=\fontsize{6}{6}\selectfont]{$v_{1}$};
      \draw (4.2,-0.75)node [color=black,font=\fontsize{6}{6}\selectfont]{$v_{m}$};
      \draw (6.3,0.75)node [color=black,font=\fontsize{6}{6}\selectfont]{$u_{1}$};
      \draw (6.3,-0.75)node [color=black,font=\fontsize{6}{6}\selectfont]{$u_{m}$};
      \draw (8.3,0.75)node [color=black,font=\fontsize{6}{6}\selectfont]{$y_{1}$};
      \draw (8.3,-0.75)node [color=black,font=\fontsize{6}{6}\selectfont]{$y_{m}$};
      \draw (6.5,-1.4)node [color=black,font=\fontsize{10}{10}\selectfont]{\hspace{1cm} $W(z)$};
      \draw (4.8,1.35)node [color=black,font=\fontsize{6}{6}\selectfont]{$x_{1}$};
        \draw (4.8,-1.35)node [color=black,font=\fontsize{6}{6}\selectfont]{$x_{m}$}; 
\end{tikzpicture}

\caption{DRG-tf block diagram.} \label{fig:Decoupled with RG}
 \label{fig: drg-tf system block}
\end{figure}
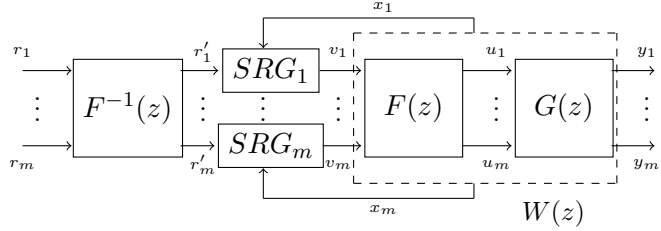

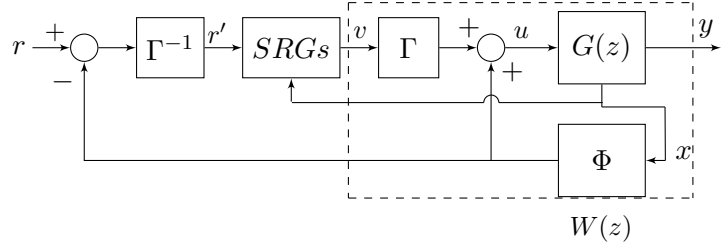
\begin{figure}
    \begin{center}
        \begin{tikzpicture}
        [node distance=1.5cm,>=latex']
        \node [input, name=input] {};
        \node [sum, right=0.5cm of input] (sum) {};
        \node [block,minimum size=0.8cm, right=0.5cm of sum] (Ginv) {$\Gamma^{-1}$};
        \node [block,minimum size=0.8cm, right=0.5 of Ginv] (RG) {$SRGs$};
        \node [block,minimum size=0.8cm, right =0.5 of RG] (Gmatrix) {$\Gamma$};
        \node [sum, right =0.5 of Gmatrix] (sum2) {};
        \node [block, right=0.7 of sum2] (system) {$G(z)$};
        \node [output, name=state1,below=0.3 of system] {};
        \node [output, name=state2,right=0.84 of state1] {};
        \node [output, right=1 of system] (output) {};
        \node [output, below of=sum2] (loop) {};
        \node [block, below of=system] (feedback) {$\Phi$};
        \node [output, name=state3,right=0.26 of feedback] {};
        \node [output, name=state4,above=0.6 of RG] {};
        \node [output, name=state5,below=0.31 of RG] {};
        \draw [draw,->] (input) -- node[above,pos=0.79,left,pos=0.1]{$r$} (sum);

        \draw [->] (sum) -- node {} (Ginv);
        \draw [->] (Ginv) -- node [above,pos=0.79]{}(RG);
        \draw [->] (RG) -- node [above,pos=0.79]{} (Gmatrix);
        \draw [->] (Gmatrix) -- node [above,pos=0.79]{$+$} (sum2);
        \draw [->] (sum2) -- node[above,pos=0.79] {} (system);
        \draw [->] (system) -- node [name=y][above,pos=0.8] {$y$}(output);
        \draw [->] (loop) -- node [right,pos=0.9] {$+$}(sum2);
        \draw [->] (feedback) -| node[left,pos=0.9] {$-$} 
        node [near end] {} (sum);
        \draw [-] (system) -- node [] {}(state1);
        \draw [-] (state1) -- node [] {}(state2);
         \draw [-] (state2) -- node [right,pos=0.85] {$x$}(state3);
        \draw [->] (state3) -- node [] {}(feedback);
        \draw [->] (state5) -- node [] {}(RG);
       
      \draw (6.5,0.2)node [color=black,font=\fontsize{10}{10}\selectfont]{$u$};

      \draw[dashed](4.2,0.6)-- (8.78,0.6);
      \draw[dashed](4.2,0.6)-- (4.2,-2);
      \draw[dashed](8.78,0.6)-- (8.78,-2);
      \draw[dashed](4.2,-2)-- (8.78,-2);
       
      \draw[-](7.58,-0.73)-- (6.2,-0.73);
      \draw[-](6,-0.73)-- (3.45,-0.73);
      \draw (6.2,-0.73) arc (0:180:0.1cm);
      \draw (7,-2.4)node
      [color=black,font=\fontsize{10}{10}\selectfont]{\hspace{1cm} $W(z)$};
       
      \draw (0.3,0.2)node
      [color=black,font=\fontsize{10}{10}\selectfont]{$+$};
       
      \draw (1.9,0.2)node
      [color=black,font=\fontsize{10}{10}\selectfont]{\hspace{1cm} $r'$};
       
      \draw (3.8,0.2)node
      [color=black,font=\fontsize{10}{10}\selectfont]{\hspace{1cm} $v$};

    \end{tikzpicture}
    \end{center}
    \caption{DRG-ss block diagram. $r$, $r'$, $v$, $u$, $y$ represent $[r_1, r_2,\ldots, r_m]^T$, $[r'_1, r'_2,\ldots, r'_m]^T$, $[v_1, v_2,\ldots, v_m]^T$, $[u_1, u_2,\ldots, u_m]^T$, and $[y_1, y_2,\ldots, y_m]^T$, respectively.}
    \label{fig: system block ss}
\end{figure}

Finally, we handle non-square systems by transforming them into square ones and applying the DRG theory explained above to the resulting square system. Detailed information will be provided in Section \ref{Sec: nonsquare system}. 

Because of the decoupling process, DRG-ss differs from DRG-tf in its analysis, implementation, and observer design. Furthermore, DRG-ss contains an additional feedback loop, which may compromise closed-loop stability.  Thus, in this paper, we present a detailed analysis of both methods, including stability, transient and steady-state properties, and observer design considerations. We also study the class of systems for which DRG performs well, and present an analysis of the robustness of DRG to unmeasured disturbances and parametric uncertainties. Note that a preliminary exposition of DRG-tf was presented in a conference version of this paper in \cite{DRG2018}. The current paper improves on \cite{DRG2018} by presenting a complete analysis of the transient and steady-state characteristics of DRG-tf, introducing and studying  DRG-ss, discussing observer design considerations, presenting the robustness analysis mentioned above, and introducing the  extension of  DRG-tf and DRG-ss to non-square systems.

The main contributions of this research are as follows:
\begin{itemize}
\item A computationally efficient constraint management technique for square MIMO systems (i.e.,  the DRG), which is a novel extension of the SRG. Two formulations of DRG (i.e., DRG-ss and DRG-tf), and their advantages and disadvantages, are studied.
\item Analysis of stability and performance of DRG in comparison with VRG. We show that  the  proposed approach is most suitable for a specific class of systems and illustrate this by examples.

\item Quantitative comparison of explicit and implicit  optimization techniques for VRG and DRG, where we show that DRG algorithm can run two orders of magnitude faster than VRG at every time step.

\item A novel extension of DRG to systems that are affected by unknown additive disturbances and parametric uncertainties.

\item An extension of DRG to non-square MIMO systems, which enhances the applicability of DRG.
\end{itemize}


\section{Preliminaries}\label{Sec: Preliminaries}

In this section, we  introduce the notations and norms that are used in this paper. Then, we review the  decoupling methods and reference governor schemes. 

The following notations are used in this paper. $\mathbb{Z}_+$ denotes the set of all non-negative integers. Let $V, U \subset \mathbb{R}^n$. Then, $V\sim U :=\{z\in \mathbb{R}^n: z+u \in V, \forall u \in U \}$ is the  Pontryagin-subtraction (P-subtraction) (\cite{Kolmanovsky_1998}).  The identity matrix with dimension $i \times i$ is denoted by $I_i$.
Given a discrete-time signal $u(t) =[u_1(t),u_2(t),\ldots,u_m(t)]^T$, the $L_2$ norm is defined as: $\|u\|_{L_2}^2 = \sum_{t=-\infty}^{\infty}u(t)^Tu(t)$, and its $L_\infty$ norm is represented as:  $\|u(t)\|_{L_\infty} = \sup_{t}(\max_{i}|u_i(t)|)$. For a system with transfer function $F(z)$ and impulse response $f(t)$,  the ${H_\infty}$ norm is defined as: 
\hbox{$\|F\|_{H_\infty} = \max_{w}\Bar{\sigma}(F(e^{jw}))$}, where $\Bar{\sigma}$ represents the maximum singular value, and the $L_1$ norm is defined as: $\|f(t)\|_{L_1} = \max_{i}\sum_{j=1}^{m}\sum_{\tau=0}^{\infty}|f_{ij}(\tau)|$, where $f_{ij}$ is the  $ij$-th element of $f$, and $m$ is the number of columns of $f$. We denote the condition number of a matrix (defined by the ratio of the maximum to the minimum singular values) by $\gamma$. A zero matrix with dimension  $i \times j$ is denoted as $0_{i,j}$.

\subsection{Review Of Decoupling Methods}\label{Section: Review Decoupled methods}

In this section, we review two decoupling methods, one based on transfer functions (\cite{Skogestad_2007}) and the other based on state space (\cite{falb1967decoupling}).

\subsubsection{Decoupling Method Based on Transfer Functions}\label{Sec. ef decoupling}

Consider the square coupled  system $G(z)$ shown in Figure~\ref{fig: drg-tf system block} and defined as: 
\begin{equation}\label{eq:Gz}
\renewcommand*{\arraystretch}{.5}
\begin{bmatrix}
Y_1(z)\\
\vdots\\
Y_m(z) \end{bmatrix}
=\underbrace{\begin{bmatrix}
      G_{11}(z) & \ldots & G_{1m}(z)\\
      \vdots & \ddots & \vdots \\
      G_{m1}(z) & \ldots & G_{mm}(z)
     \end{bmatrix}}_{G(z)} \begin{bmatrix}
U_1(z)\\
\vdots\\
U_m(z) \end{bmatrix}
 \end{equation}
where $Y_i$ and  $U_i$ are the $\mathcal{Z}$-transforms of $y_i$ and $u_i$, respectively. The system $G(z)$ consists of  diagonal subsystems with dynamics $G_{ii}(z)$ and off-diagonal (interaction) subsystems with dynamics  $G_{ij}(z), i\neq j$. 
A decoupled system is perfectly diagonal (i.e., each output depends on only one input). As shown in Figure~\ref{fig: drg-tf system block}, we decouple the system  by adding a filter, $F(z)$, before $G(z)$, so that the product $G(z) F(z)$ yields a diagonal transfer function matrix  \(W(z):=G(z) F(z)\) (\cite{Skogestad_2007}). By doing so, each output $Y_i$  depends only on the new input $V_i$ through: $Y_i(z) = W_{ii}(z) V_i(z)$, where $W_{ii}(z)$ is the $i$-th diagonal elements of $W(z)$ and $V_i(z)$ is the $\mathcal{Z}$-transform of $v_i$. 

In this paper, we study two structures for $W(z)$, which lead to the following two  decoupling methods:
\begin{itemize}
\item Diagonal Method:
We find $F(z)$ such that $W(z)=\mathrm{diag}(G_{11},G_{22},\ldots,G_{mm})$. The filter and the inverse filter are defined  as:
\begin{equation}\label{eq:diagonal method}
F(z)=G^{-1}(z) W(z), \;\;\; F^{-1}(z)=W^{-1}(z)  G(z)
\end{equation}

\item Identity Method:
We find $F(z)$ such that $W(z)$ equals the identity matrix. The filter and the inverse filter are defined  as:
\begin{equation}\label{eq:intentical method}
F(z)=G^{-1}(z), \;\;\;\; F^{-1}(z)= G(z)
\end{equation}
\end{itemize}

Notice that in both methods, the elements of  either $F(z)$ or $F^{-1}(z)$ (or both) may be improper transfer functions because of $G^{-1}(z)$ and $W^{-1}(z)$. If this is the case, they cannot be implemented in the DRG scheme of Figure~\ref{fig: drg-tf system block}. In order to make them proper, we multiply $F(z)$ and $F^{-1}(z)$ by time-delays of the form $\frac{1}{z^{\beta}}$, where $\beta$ refers to how much time delay should be added to make the transfer functions proper. Note that if delays are added to either $F$ or $F^{-1}$, the system response will be delayed under the DRG scheme, even if no constraint violation is likely. This is a caveat of  the DRG approach; however, if the sample time is small enough, the introduced delay would be negligible. Also note that $G^{-1}(z)$ might introduce unstable poles to $F(z)$ or $F^{-1}(z)$, which will cause the system to become unstable. Further assumptions are introduced later in the paper to avoid such situations.

\begin{remark}
In the above discussion, the matrix $W(z)$ is assumed to be diagonal, which means that every $y_i$ depends only on $v_i$. This, however, is only one possible structure for $W(z)$. It is also possible to decouple the system by having each $y_i$ depend on one $v_j$, $j\neq i$. In this case, the structure of $W(z)$ will be such that every row will have only one non-zero element. Similarly, each column will have only one non-zero element. The DRG scheme presented in this paper can be used with this structure of $W$. However, for the sake of simplicity, in the rest of this paper, we will assume that $W$ is diagonal. 
\end{remark}

\subsubsection{Decoupling Method based on State Feedback}\label{Sec: feedback decoupling}

In this section, we describe input/output decoupling via state-feedback, as presented in 
\cite{falb1967decoupling, lloyd1970decoupling}. Consider a discrete-time coupled system, $G$ (see Figure~\ref{fig: system block ss}), given in state-space form by:
\begin{equation}\label{eq:state-space model}
\begin{aligned}
&x(t+1) =Ax(t)+Bu(t)\\
&y(t)=Cx(t)+Du(t)
\end{aligned}
\end{equation}
where $x\in \mathbb{R}^n$ is the state vector, $u\in \mathbb{R}^m$ is the input, and $y\in \mathbb{R}^m$ is the output vector. Note that the number of inputs is equal to the number of outputs.

In the remainder of this discussion, we assume no direct feed through between $u$ and $y$ (i.e., $D=0$) as required by \cite{falb1967decoupling, lloyd1970decoupling}. Note that the case where $D\neq 0$ can be handled as well (e.g., see  \cite{silverman1970decoupling}), but for the sake of simplicity, here we will only present the case where $D=0$. 

The substitution of $u=\Phi x+\Gamma v$, where $\Phi$ is an $m\times n$ matrix and $\Gamma$ is an $m \times m$ matrix, into \eqref{eq:state-space model} results in:
\begin{equation}\label{eq: Astar and Bstar}
x(t+1) = \underbrace{(A+B\Phi)}_{\bar{A}}x(t)+\underbrace{B\Gamma}_{\bar{B}} v(t), \;\;
y(t) = Cx(t)
\end{equation}
Let $d_1,d_2,\ldots,d_m$ be defined by:
\[ d_i= \text{min}\{j: C_iA^jB \neq 0, j=0,1,\ldots,n-1\} \]
where $C_i$ denotes the $i$-th row of $C$. If $C_iA^jB = 0$ for all $j=0,1,\ldots,n-1$, then we set  $d_i=n-1$. Let $A^*{\in \mathbb R}^{m \times n}$ and $B^*{\in \mathbb R}^{m \times m}$ be defined by:
\begin{equation}\label{eq:B_star}
\renewcommand*{\arraystretch}{.6}
A^* = \begin{bmatrix}
      C_1A^{d_1+1}\\
      \vdots\\
      C_mA^{d_m+1}\\
\end{bmatrix}, B^* = \begin{bmatrix}
      C_1A^{d_1}B\\
      \vdots\\
      C_mA^{d_m}B\\
\end{bmatrix} 
\end{equation}

It is shown in \cite{falb1967decoupling} that there exist a pair of matrices $\Phi$ and $\Gamma$ that decouple the system from $v$ to $y$ if and only if $B^*$ is nonsingular. 

Below, we study two structures for $\Phi$ and $\Gamma$, which lead to the following two  decoupling methods:

\begin{itemize}
\item Identity method:
The pair
\begin{equation}\label{eq:pair1}
\Phi = -B^{*-1}A^*, \:\:\:\:\:\: \Gamma= B^{*-1}
\end{equation}
leads to $y_i(t+d_i+1)=v_i(t)$, which means that the $i$-th output depends only on the $i$-th input with one or more time delays.

\item Pole-assignment method:
We can decouple the system while simultaneously assigning the poles of the decoupled system by using the following choice of $\Gamma$ and $\Phi$:
\begin{equation}\label{eq:pair2}
\Phi = B^{*-1}\left[\sum_{k=0}^{\delta}M_kCA^k-A^*\right], \;\;
\Gamma = B^{*-1}
\end{equation}
where $\delta=\max d_i$ and $M_k$ are $m \times m$ diagonal matrices that are designed to assign the poles at specific locations. For more details, please see \cite{falb1967decoupling}. Note that not all of the eigenvalues of $\bar{A}$ can be arbitrarily assigned. However, it is shown in \cite{falb1967decoupling} that if $m+\sum_{i=1}^{m}d_i=n$, all the poles of the decoupled system can be assigned.
\end{itemize}

\subsection{Review Of Reference Governors}\label{Section:review RG}

This section reviews the  scalar and vector reference governors as presented in \cite{Kolmanovsky_2014,GARONE2017306}.  Consider a discrete-time square linear system described by the state-space model in \eqref{eq:state-space model}.
Suppose $y(t)\in \mathbb{R}^m$ is the constrained output vector, over which the following constraints are imposed: $y_i(t) \in \mathbb{Y}_i, i=1,\ldots,m,  \forall t\in \mathbb{Z_+}$, where $\mathbb{Y}_i\subset \mathbb{R}$ are specified constraint sets. The constraints can also be expressed in vector form as:  $y(t) \in \mathbb{Y},$ where $\mathbb{Y}$ is given by the Cartesian product $\mathbb{Y} = \mathbb{Y}_1 \times \cdots \times \mathbb{Y}_m$. A review of SRG and VRG is provided next.

\subsubsection{Scalar Reference Governor (SRG)}\label{sec:srg}

SRG computes the input $u(t)$ in \eqref{eq:state-space model} as a convex combination of the previous input $u(t-1)$ and the  current reference $r(t)$, i.e., 
\begin{equation}\label{eq:kapa}
u(t)=u(t-1)+\kappa(r(t)-u(t-1))
\end{equation}
where $\kappa$ is the solution of the following linear program:
\begin{equation}\label{eq: LP for RG}
\begin{aligned}
&\underset{\kappa\in [0,1]}{\text{maximize}}
& & \mathrm{\kappa} \\
& \hspace{10pt} \text{s.t.}
& & u(t)=u(t-1)+\kappa(r(t)-u(t-1))\\
&&&(x(t),u(t)) \in O_\infty
\end{aligned}
\end{equation}
where $O_\infty$ is the Maximal Admissible Set (MAS) discussed below. In the above optimization problem, $x(t)$, $r(t)$, and $u(t-1)$ are known parameters, and $\kappa$ is the optimization variable. Note that $\kappa=0$ means that in order to keep the system safe, $u(t)=u(t-1)$, and $\kappa=1$ means that no violation is predicted and, therefore, $u(t)=r(t)$. This RG formulation ensures closed-loop stability and recursive feasibility. Note that if constraint violation is predicted for any output, all inputs will be affected equally because a single $\kappa$ is used. Thus, the response of SRG may be conservative for MIMO systems.

MAS is the set of all safe initial conditions and inputs, defined as: 
$$
O_\infty := \{(x_0,u_0): x(0)=x_0, u(t)=u_0, y(t)\in \mathbb{Y}, \forall t\geq0\}
$$
where it is assumed that $u(t) = u_0$ is held constant for all time. Computation of MAS is possible, as $y(t)$ can be expressed explicitly as a function of $x(0)=x_0$ and $u_0$:
$y(t)=CA^tx_0+(C(I-A)^{-1}(I-A^t)B+D)u_0$. 
MAS can be computed using the above, and can be shown to be a polytope of the form:
\begin{equation}\label{eq:OinfForm}
O_\infty = \{(x_0,u_0): H_x x_0 + H_u u_0 \leq h\}
\end{equation}
Conditions for $O_\infty$ to be  finitely determined (i.e., matrices $H_x, H_u, h$ to be finite dimensional) are  discussed in \cite{Ilya_1995,Gilbert_1995}. Basically, to ensure that $O_\infty$ is finitely determined, the steady-state constraint is first tightened, resulting in the steady-state admissible set, $P_{ss}$:
\begin{equation}\label{eq: O_inf steady state}
 P_{ss}:=\{(x_0,u_0): G_0u_0 \in \mathbb{Y}_{ss} \}
\end{equation}
where $G_0$ is the DC  gain of system \eqref{eq:state-space model}, and ${\mathbb{Y}_{ss}:=(1-\epsilon)\mathbb{Y}}$ for some small positive $\epsilon$. The intersection of  $P_{ss}$ with $O_\infty$ (i.e., adding the inequality in \eqref{eq: O_inf steady state} to \eqref{eq:OinfForm}) leads to a finitely determined inner approximation of $O_\infty$. In the sequel, with some abuse of notation, we assume that $O_\infty$ includes the tightened steady-state constraint and is, hence,  finitely determined.

\subsubsection{Vector Reference Governor (VRG)}

VRG extends the capabilities of  SRG and uses diagonal matrix $K$  instead of scalar $\kappa$.  Eq. \eqref{eq:kapa} is reformulated as:

\begin{equation}\label{output of RG}
  u(t)=u(t-1)+K(r(t)-u(t-1))  
\end{equation}

where $K=\mathrm{diag}(\kappa_i)$. The values of  $\kappa_i, i=1,...,m$, are chosen by solving a Quadratic Program (QP):
\[\begin{aligned}
&\underset{\kappa_i \in [0,1]}{\text{minimize}}
& & \|u(t)-r(t)\| \\
& \hspace{10pt}\text{s.t.}
& & u(t)=u(t-1)+K(r(t)-u(t-1))\\
&&&(x(t),u(t)) \in O_\infty
\end{aligned}\]
Note that for VRG, $O_\infty \subset \mathbb{R}^{n+m}$ can be computed in the same way as explained in Subsection \ref{sec:srg}. Because of the increased number of optimization variables and the QP formulation, VRG is more computationally demanding than SRG.

\subsubsection{Maximal Admissible Sets (MAS) for systems with disturbances}\label{sec:MAS with disturbance}

In this section, we review  the concept of robust MAS for systems affected by additive disturbances:
\begin{equation}\label{eq:system with disturbance}
\begin{aligned}
&x(t+1)=Ax(t)+Bu(t)+B_{w}w(t)\\
&y(t)=Cx(t)+Du(t)+D_{w}w(t) \in \mathbb{Y}
\end{aligned}
\end{equation}
where $\mathbb{Y}$, as before, is the constraint set. The  disturbance input satisfies $w \in \mathbb{W}$, where $\mathbb{W}\subset \mathbb{R}^d$ is a compact polytope with the origin in its interior. 
Works that have explored unknown disturbances for RG schemes can be found in \cite{Osorio_2018,Kolmanovsky_1998,Gilbert_1999,Osorio_2019}. 

 In order to define the robust MAS for system \eqref{eq:system with disturbance}, we write $y(t)$ as a function of the initial state, $x_0$, the constant input, $u(t) = u_0$, and the disturbances:
\begin{equation}\label{eq: model recursion2}
\begin{aligned}
y(t)&=CA^tx_{0}+(C(I-A)^{-1}(I-A^t)B+D )u_{0}\\
& + C\sum_{j=0}^{t-1}A^{t-j-1}B_{w} w(j) + D_{w} w(t)
\end{aligned}
\end{equation}
We now define the sets $\mathbb{Y}_{t}$ using the following recursion:
\begin{equation}\label{eq: Y_0 and Y_t}
\mathbb{Y}_{0}=\mathbb{Y}\sim D_w \mathbb{W},\quad
\mathbb{Y}_{{t+1}}=\mathbb{Y}_{t}\sim CA^{t}B_w \mathbb{W}
\end{equation}
P-subtraction allows us to  rewrite the requirement ${y(t) \in \mathbb{Y}}$, $\forall w(j)\in \mathbb{W}, j = 0,\ldots,t$ as:
$$
CA^t x_{0}+ (C(I-A)^{-1}(I-A^t)B+D)u_0 \in \mathbb{Y}_{t}
$$

Finally, we define the robust MAS as:
\begin{equation}\label{eq:O_t definition}
\begin{aligned}
O&_\infty := 
\{(x_{0},u_{0})\in \mathbb{R}^{n+m}: G_{0} u_{0} \in \bar{\mathbb{Y}}, \\
&CA^t x_{0}+ (C(I-A)^{-1}(I-A^t)B+D)u_{0} \in \mathbb{Y}_{t}\}
\end{aligned}
\end{equation}
where $G_{0}$ is the DC gain of  (\ref{eq:system with disturbance}) from $u$ to $y$, and $\bar{\mathbb{Y}}:=(1-\epsilon)\mathbb{Y}_{t}$ for some $0 < \epsilon \ll 1$ and large $t$. Note that $\bar{\mathbb{Y}}$ is introduced to ensure finite-determinism of $O_{\infty}$ (similar to Section \ref{sec:srg}).

\section{Decoupled Reference Governor based on Transfer Function Decoupling: DRG-tf}\label{sec: drg-tf}

As mentioned in the Introduction (see Figure~\ref{fig: drg-tf system block}), DRG-tf is based on decoupling the system using the method described in Section \ref{Sec. ef decoupling} to obtain a completely diagonal system $W(z)$, where $W(z):=\mathrm{diag}(W_{11}(z),\ldots,W_{mm}(z))$, followed by implementing $m$ independent scalar reference governors for the resulting decoupled subsystems, and coupling the dynamics using $F(z)^{-1}$ to cancel the effects of $F(z)$. 
Because the SRGs are inherently nonlinear elements, one challenge with DRG-tf is  quantifying the tracking performance of the system. State estimation is another challenge. More specifically, how can the states of the  decoupled subsystems be obtained and fed back to the SRGs? 
In this section, we elaborate on DRG-tf with a special focus on the above challenges.

The following assumptions are made in this section:

\begin{assumption}\label{A: system stable}
System $G(z)$  in Figure~\ref{fig:Decoupled with RG} 
reflects the combined closed-loop dynamics of the plant with a stabilizing controller. Consequently, $G(z)$ is asymptotically stable (i.e., $|\lambda_i(A)|<1, i=1,...,n$). Furthermore, we assume that all diagonal subsystems of the decoupled system $W(z)$ are also asymptotically stable.
\end{assumption}

\begin{assumption}\label{A: NMP}
 $G(z)$ in Figure~\ref{fig: drg-tf system block} is invertible and has a stable inverse. 
\end{assumption}

\begin{assumption}\label{A: intervals}
The constraint sets $\mathbb{Y}_i$ are closed intervals of the real line containing the origin in their interiors. This is in agreement with the assumptions commonly made in the literature of reference governors. We thus assume that $\mathbb{Y}_i = \{y_i: \underline{s} \leq y_i \leq \bar{s}$\}. 
\end{assumption}

Consider the system in Figure~\ref{fig:Decoupled with RG} with  $G(z)$ given in  \eqref{eq:Gz}.
To design the SRGs, we compute the maximal admissible set (MAS) for each $W_{ii}$, denoted by $O_{\infty}^{W_{ii}}$. 
To obtain these sets, we find a minimal state-space realization of each subsystem $W_{ii}$, and compute its MAS as:
\begin{equation}\label{eq:O_inf for DRG}
\begin{aligned}
O_{\infty}^{W_{ii}}:=&\{(x_{i_0},v_{i_0})\in \mathbb{R}^{n_i+1}: x_i(0)=x_{i_0},\\
& v_i(t)=v_{i_0}, y_i(t)\in \mathbb{Y}_i, \forall t \in \mathbb{Z}_+ \}
\end{aligned}
\end{equation}
where $x_i$ and $n_i$ are the state and the order of $W_{ii}$, respectively. If the states of $G$ are unknown, an observer can be designed, which will be explained later. 

The DRG-tf formulation is based on the sets $O_{\infty}^{W_{ii}}$. Specifically, the inputs $v_i$ are defined, similar to \eqref{eq:kapa}, by:
\begin{equation}\label{eq:diagonalRG}
v_i(t)=v_i(t-1)+\kappa_i(r'_i(t)-v_i(t-1))
\end{equation}
where $\kappa_i$ are computed by $m$ independent linear programs:
\begin{equation}\label{eq:kappa for DRG}
\begin{aligned}
&\underset{\kappa_i\in [0,1]}{\text{maximize}}
& & \mathrm{\kappa_i} \\
& \hspace{10pt} \text{s.t.}
& & v_i(t)=v_i(t-1)+\kappa_i(r'_i(t)-v_i(t-1))\\
&&&(x_i(t),v_i(t)) \in O_{\infty}^{W_{ii}}
\end{aligned}
\end{equation}

\begin{remark}
Note that, since $F(z)$ and $F^{-1}(z)$ are both assumed to be stable, the DRG formulation above inherits the stability and recursive feasibility properties of standard SRG theory. Specifically, for a constant signal ${r(t)=r}$, $r'(t)$ converges (because of stability of $F^{-1}$), which implies that $v(t)$ converges (because of stability  of SRGs). Thus, the system of Figure~\ref{fig: drg-tf system block} is guaranteed to be stable.
\end{remark}

Below, we specialize the DRG-tf formulation to the two decoupling methods presented in Section \ref{Sec. ef decoupling}, namely the diagonal and the identity methods. We address the subtleties associated with both methods, including observer design and the structure of the maximal admissible sets. We also illustrate, using examples, that DRG-tf is most effective for systems with small condition numbers and singular values. We will provide a theoretical basis for this observation in Section \ref{sec:analysisDRGtf}. Finally, we show, using examples and analysis, that the DRG-tf with identity method may not perform well for certain systems even if the plant model is known precisely.

\subsection{Diagonal method}\label{sec:diag}

Recall that decoupling using the diagonal method leads to the decoupled system  $W(z):=\mathrm{diag}(G_{11},\ldots,G_{mm})$. Thus, we assume that $O_{\infty}^{W_{ii}}$ are created using minimal realizations of $G_{ii}$ (i.e., the diagonal subsystems of the original system). In this section, we first present an example to highlight the key attributes of the DRG-tf with diagonal method. For this example, we assume that the states of all $G_{ii}$ are measured and are available for feedback to the SRGs. This assumption will be relaxed in the next subsection, which discusses the issue of observer design.

\subsubsection{Motivating example} 
Consider the system $G(z)$ in \eqref{eq:Gz} given by: \begin{equation}\label{eq: illustrative example1}
G(z)=\begin{bmatrix}
      \frac{0.9}{(z-0.2)^2}&\frac{q}{3z+1}\\
      \frac{3}{(2z-1)^2}&\frac{0.4}{z-0.6}\\
     \end{bmatrix}
\end{equation}
 and the constraints  defined by $-1.2 \leq y_1 \leq 1.2$ and  $-3.9 \leq y_2 \leq 3.9$. The parameter $q$ will be selected later. 
 
 \begin{figure}
\begin{center}
\includegraphics[scale=0.45]{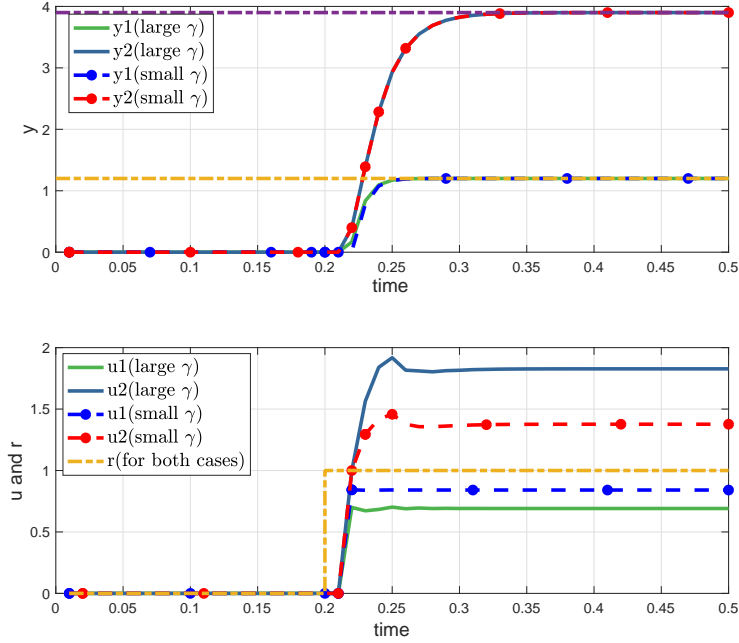}
\caption{Comparison of DRG-tf for systems with small and large condition numbers ($\gamma$). Top plot is the output (constraints shown by dashed lines) and the bottom plot is the reference $r(t)$ and the plant input $u(t)$.}
\label{fig: example1}
\end{center}
\end{figure}
\begin{figure}
\centering	
\includegraphics[scale=.65]{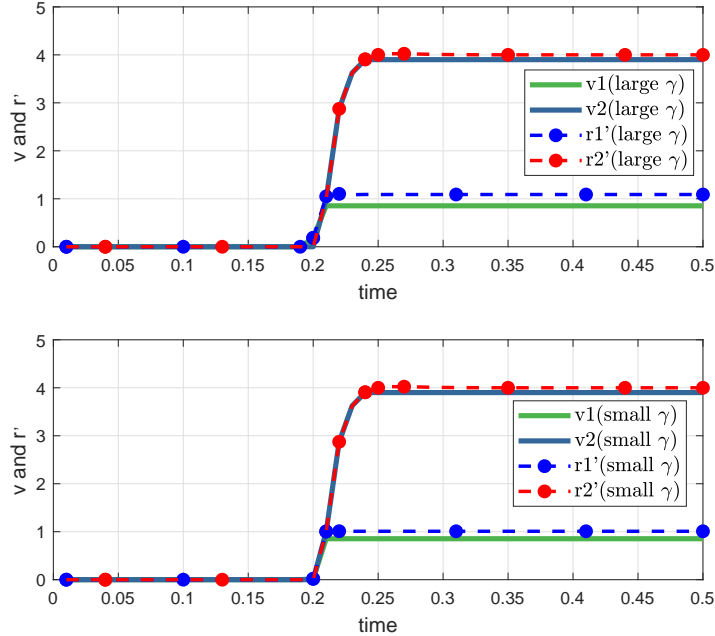}
\caption{Comparison of $r'(t)$ and $v(t)$ in DRG-tf with large (top plot) and small (bottom plot) condition number systems.}
\label{fig: example1randv}
\end{figure}

Next, we use \eqref{eq:diagonal method} to find $F(z)$. Noticing that in this example, we encounter the situation that both $F(z)$ and $F^{-1}(z)$ are not proper, we multiply them by $\frac{1}{z}$. Finally, we obtain the decoupled system: $W(z)=\frac{1}{z} \mathrm{diag}(\frac{0.9}{(z-0.2)^2}, \frac{0.4}{z-0.6})$.  In the discussion below, we denote the DC-gain matrix of $F(z)$ by $F_0$.

As mentioned in the Introduction, a requirement for DRG is that the signals $u(t)$ and $r(t)$ should be  as close as possible to ensure that the tracking performance of the system does not degrade significantly as compared with  VRG. As it turns out, this will be the case if the maximum singular value of  $F_0$, i.e., $\Bar{\sigma}$, is small. On the other hand, this will not be the case if the minimum singular value of  $F_0$, $\underline{\sigma}$, is large. We will analytically prove these statements in Section \ref{sec:analysisDRGtf}, but here we illustrate them via our example. For this, we consider two different $q$'s in \eqref{eq: illustrative example1}: $q=0.5$ and $q=0.05$.  If $q=0.5$, then the maximum and minimum singular values of $F_0$ are $\Bar{\sigma}=4.51$ and $\underline{\sigma}=0.30$, and its condition number is $\gamma=14.94$. If $q=0.05$, then $\Bar{\sigma}=3.39$, $\underline{\sigma}=0.30$, and $\gamma=11.21$. The second case has smaller condition number and maximum  singular value compared to the first one.

We proceed to design the DRG-tf  based on $W(z)$. In this case, we obtain $O_{\infty}^{W_{11}}$ and $O_{\infty}^{W_{22}}$ based on \eqref{eq:O_inf for DRG}. We simulate the response of this system to a step of size 1 in both $r_1$ and $r_2$. The simulation results  for both $q=0.5$ and $q=0.05$ are depicted in Figure~\ref{fig: example1}.

\begin{figure}
\centering	
\includegraphics[scale=0.4]{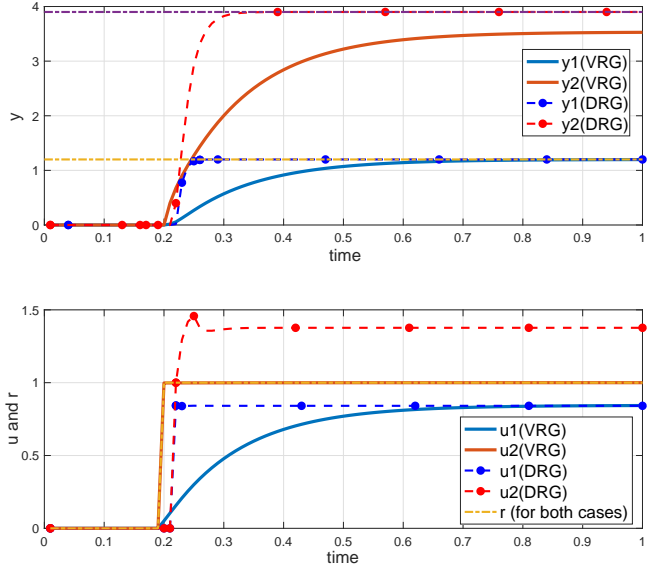}
\caption{Comparison of VRG and  DRG-tf for the system shown in \eqref{eq: illustrative example1} with $q=0.05$. Top plot is the output. The dashed lines are the output constraints. Bottom plot is $u$ compared with $r$. The dashed yellow line is the reference.}
\label{fig: vrg compared drg}
\end{figure}

As can be seen, the outputs in both cases satisfy the constraints, as required. However,  $u(t)$ is  closer to $r(t)$ for the system with smaller condition number and singular value (i.e., for $q=0.05$), which indicates better tracking performance. 

Furthermore, it can be seen from Figure~\ref{fig: example1randv} that $v(t)$ is always below $r'(t)$, which is expected since in SRG theory, the output of SRG is always bounded above (or below) by its input (in this case,  $r'$). However, note from Figure~\ref{fig: example1} that  $u(t)$ may be above $r(t)$, which is a situation that does not arise in SRG or VRG applications. The reason can be explained as follows 
(see Figure~\ref{fig: drg-tf system block}): at steady state, $u$ converges to $F_0 v$ and $r'$ converges to $F_0^{-1} r$. Thus, $r-u$ converges to $F_0 (r'-v)$, which indicates that, even if $r_i'>v_i$, $u_i$ may be above or below $r_i$ 
depending on $F_0$. Note that $u_i$ above $r_i$ may or may not be acceptable depending on the specific application. An example where this situation is acceptable is a distillation process \cite{Skogestad_2007},  because the constrained outputs (i.e., product compositions) determine the efficiency of the process. On the other hand, an example where this situation is not acceptable is in aerospace applications like controlling a drone, since the roll, pitch, and yaw angles cannot exceed their commands.

Figure~\ref{fig: vrg compared drg} shows a comparison between VRG and DRG-tf for $q=0.05$. There is a time delay in the  response of DRG-tf that is caused by the delay added to $F$ and $F^{-1}$ to make them proper. Note that $u(t)$ is below $r(t)$ for the VRG but not for DRG-tf, as explained above. More interestingly,  the rise time for DRG-tf is much faster than that of VRG. This is because, for this example, the interacting dynamics are slow and dominant, which causes the VRG to generate slow inputs. The DRG-tf, on the other hand, operates on the decoupled system where these slow dynamics have been canceled. This shows that, in addition to computational advantages, the DRG-tf may also have performance advantages compared to VRG.

To investigate the above observation more thoroughly, a comparison between the volumes for the MAS's of DRG-tf (i.e., volumes of $O_{\infty}^{W_{11}}$ and $O_{\infty}^{W_{22}}$) and VRG (i.e., volume of $O_\infty$) is as follows for $q=0.05$: $\mathrm{vol}(O_{\infty}^{W_{11}})=1.698$, $\mathrm{vol}(O_{\infty}^{W_{22}})=7.761$, and $\mathrm{vol}(O_{\infty})=7.761$ (volumes are computed using the MPT toolbox that is introduced in \cite{Herceg_2013}). Clearly, the sum of the volumes for DRG-tf exceeds the volume of the MAS for VRG, which is in agreement with the observations of Figure~\ref{fig: vrg compared drg} regarding a less conservative response from DRG-tf in comparison to VRG. A deeper analysis of the geometric properties of the MAS's is outside of the scope of this paper; however, it may be considered as an interesting topic for future work.

\subsubsection{Observer design} \label{sec:DRG-tf observer}

In this section, we consider the case where the states of $W_{ii}$, or equivalently $G_{ii}$, are not measured. Indeed, an observer will be required to estimate the states. One option is to use an open loop observer for each $W_{ii}$. To explain, let $(A_{ii},B_{ii},C_{ii},D_{ii})$ be a minimal realization of $W_{ii}$. An open loop observer can  be designed by computing the state estimate recursively:
\begin{equation}\label{eq:open loop observer}
\hat{x}_i(t+1) =A_{ii}\hat{x}_i(t)+B_{ii}v_i(t)
\end{equation}
where $\hat{x}_i$ is the estimate of the state $x_i$. In real-time, the SRGs in the DRG-tf formulation use $\hat{x}_i$ instead of $x_i$. Note that the open loop observer works well only when the system model and the initial conditions are both accurately known, which is not always the case.

To improve upon the open loop observer, feedback can be implemented from the measured output, as is done in standard observer design. We consider two observer design strategies below. The first assumes that all $y_i$ are measured, which leads to $m$ decoupled observers, and the second assumes that some $y_i$ are not measured, necessitating a centralized observer. Both strategies lead to subtleties for DRG-tf that we highlight in this section. 

\textbf{Decoupled observers:}
First suppose that all $y_i$ are available for measurement. In this case, we can design $m$ decoupled Luenberger observers as follows:
\begin{equation}\label{eq:observerClosed}
\hat{x}_i(t+1) =A_{ii}\hat{x}_i(t)+B_{ii}v_i(t) + L_i (y_i(t) - C_{ii} \hat{x}_i(t) - D_{ii} v_i(t))
\end{equation}
where $L_i$ is designed to assign the eigenvalues of $A_{ii}-L_iC_{ii}$ in the unit circle. Note that for the DRG-tf implementation, the state that feeds back to SRG$_i$ is  $\hat{x}_{i}$.
 
 A challenge with the above observer is that of selecting the initial conditions for each $\hat{x}_i$. Indeed, if the observers are not initialized properly, the DRG-tf scheme may not be able to enforce the constraints. We provide a solution to this problem below, for the case where the initial condition of $G(z)$, denoted by $x_0$, is known precisely. We will treat the case of unknown $x_0$ later.

Our solution is to modify the input to $G(z)$ in Figure~\ref{fig: drg-tf system block} to explicitly cancel the effects of $x_0$. To see how this can be done, note that the output of $G(z)$ with initial condition $x_0$ can be written as:
\( y(t)=CA^tx_0+(C(I-A^t)(I-A)^{-1}B+D)u_0  \), where $A$, $B$, $C$, and $D$ are the state space matrices of $G$. Denote by $M(z)$ the $\mathcal{Z}$-transform of $CA^t$ for the sake of simplicity of notation.  Note that $M(z)$ represents the initial condition response of the system. In order to get $Y(z) = W(z) V(z)$, where $W$ is a desired diagonal matrix as before, we define $U(z)$ as:
\begin{equation} \label{eq: augment initial condition}
U(z) = F(z) V(z)  - G^{-1}(z)M(z) x_0
\end{equation}
where $F(z)=G(z)^{-1} W(z)$ as before (compare \eqref{eq: augment initial condition} with $U(z) = F(z) V(z)$ in Figure~\ref{fig: drg-tf system block}). This will effectively cancel the initial conditions and result in a completely decoupled system.  The observers given in \eqref{eq:observerClosed} and the SRGs can now be applied as before. Note that the inverse filter $F^{-1}(z)$ in Figure~\ref{fig: drg-tf system block} need not be altered.
 
 \textbf{Centralized observer}:
 Now consider the more interesting case, where either some $y_i$ are not measured, or outputs other than $y_i$ are measured. Since the dynamics from $v$ to $y$ are still required to be decoupled, $m$ decoupled SRGs can still be used in the DRG-tf formulation. However,  we can not design $m$ decoupled observers for each $W_{ii}$ as we did before (since independent measurements are not available), and must instead design one centralized observer for $W$. This, in turn, implies that the SRGs must use a MAS different  from \eqref{eq:O_inf for DRG}. To elaborate on these ideas, let $y(t)$, as before, denote the constrained output vector, and let $y_m$ denote the measured output vector. Let $(A,B,C,D)$, $(A,B,C_m,D_m)$, and $(A_f,B_f,C_f,D_f)$ be realizations of $G(z)$ from $u$ to $y$,   $G(z)$ from $u$ to $y_m$, and $F(z)$, respectively. The states of $F(z)$, $x_f$, are known at the time of implementation so they do not need to be estimated. To estimate the states of $G(z)$, $x$, an observer is designed using feedback on the measurements $y_m$:
\begin{equation}\label{eq:observerClosed2}
\hat{x}(t+1) =A\hat{x}(t)+B u (t) + L (y_m(t) - C_m \hat{x}(t) - D_m u(t))
\end{equation}
 Using the above,  the states of the entire system, i.e., $x_w=(x_f, x)$, can be estimated by  $\hat{x}_w=(x_f, \hat{x})$. Note that initialization of this observer is simple if the initial condition of $G(z)$, i.e., $x_0$ is known: in this case, the initial condition of the observer is set to $(0,x_0)$. 
 
 Recall that to construct $O_{\infty}^{W_{ii}}$, the state-space model of the $i$-th diagonal subsystem of $W$, $W_{ii}$, is required.
 However, the states of each individual $W_{ii}$ is not directly available, which is why the SRGs can no longer use the $O_{\infty}^{W_{ii}}$ sets as described in \eqref{eq:O_inf for DRG}. To remedy this, we use the following realization of $W$, which is the augmented dynamics of $F(z)$ and $G(z)$: 
 \begin{equation}\label{eq: realizaiton of W}
 \renewcommand*{\arraystretch}{.5}
\begin{aligned}
&x_w(t+1) = \underbrace{\begin{bmatrix}
      A_f & 0\\[0.3em]
      B C_f $ $ & $ $ A \\[0.3em]
      \end{bmatrix}}_{A_w}x_w(t)
      +\underbrace{\begin{bmatrix}
      B_f\\[0.3em]
      B D_f\\
     \end{bmatrix}}_{B_w}v(t)\\
&y(t)=\underbrace{\begin{bmatrix}
      D C_f $ $ & $ $ C 
      \end{bmatrix}}_{C_w}x_w(t)
      +\underbrace{DD_f}_{D_w}v(t)
\end{aligned}
\end{equation}
  Using \eqref{eq: realizaiton of W}, the state-space model of $W_{ii}$ is given by:  $(A_w, B_w(:,i), C_w(i,:), D_w(i,i))$, where $B_w(:,i)$ is the $i$-th column of $B_w$, $C_w(i,:)$ is the  $i$-th row of $C_w$, and $D_w(i,i)$ is the $(i,i)$-th element of $D_w$. Thus, we construct $O_{\infty}^{W_{ii}}$  based on the state-space realization $(A_w, B_w(:,i), C_w(i,:), D_w(i,i))$ and, for real-time implementation, each SRG uses the state of the {\it entire} system (i.e., $\hat{x}_w=(x_f,\hat{x})$) as feedback. 

 Finally, for the case where the initial conditions are not known, either observer (\eqref{eq:observerClosed} or \eqref{eq:observerClosed2}) can be used to estimate the states; however, during the transient phase of the observer, the states may be incorrect, which may lead to constraint violation. To remedy this issue, one can ``robustify" $O_{\infty}^{W_{ii}}$ as discussed in Section \ref{Sec: Robust DRG}, or alternatively, one could allow the transients to subside before running the system with the reference governor.

\subsection{Identity method}

As previously mentioned, for the identity method, $W(z)$ is either the identity matrix (if $G^{-1}(z)$ is proper) or the identity matrix with one or more time delays (if $G^{-1}(z)$ is not proper). In other words, the input-output behavior of the $i$-th channel is given by $y_i(t) = v_i(t-\beta)$, where $\beta \in \mathbb{Z}_+$ is the delay added to make $G^{-1}(z)$ proper. An interesting observation can be made: the MAS for a pure delay system is independent of the state and  is given by 
\begin{equation} \label{eq:delay MAS}
O_{\infty}^{W_{ii}} = \{(x_{i_0},v_{i_0}): v_{i_0} \in \mathbb{Y}_i\}.
\end{equation}
The above follows directly from the definition of $O_\infty$ in Section \ref{sec:srg} and by noting that the initial states (i.e., previous outputs) of the time-delay system can be chosen as 0, which is automatically admissible. Note also that, MAS for this case is finitely determined, without the need to tighten the steady-state constraint.

The DRG formulation for the case of identity method is the same as \eqref{eq:diagonalRG}, \eqref{eq:kappa for DRG}. However, the implementation is greatly simplified due to the structure of $O_{\infty}^{W_{ii}}$. To see this, note that the structure of \eqref{eq:delay MAS} implies that $\kappa_i$ in \eqref{eq:kappa for DRG} is chosen so that $v_i(t) \in \mathbb{Y}_i$. Since $\mathbb{Y}_i$ is an interval (per Assumption A. \ref{A: intervals}), this implies that $\kappa_i$ is selected so that $v_i(t)$ is simply clipped (i.e., saturated) at the constraint. Thus, the overall DRG can be implemented as a bank of $m$ decoupled saturation functions, which  greatly simplifies real-time implementation.

Similar to the diagonal method, if $G(z)$ has a small condition number or maximal singular value, the inputs to system $G(z)$ would be far away from the references and, hence, tracking performance may suffer. Since this is the same phenomenon as the diagonal case, we will not provide numerical examples.

\begin{figure}
\centering	
\includegraphics[scale=0.6]{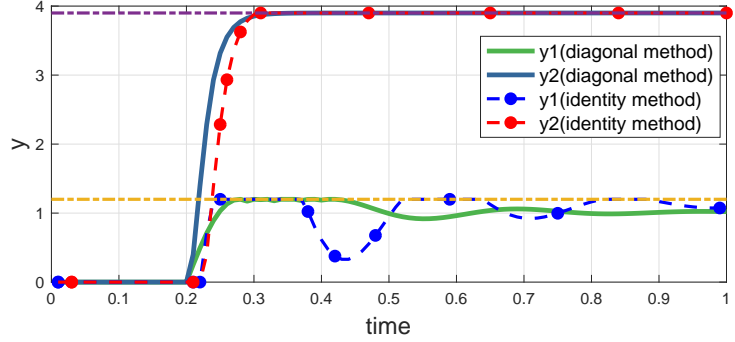}
\caption{Comparison of outputs between diagonal method and identity method.}
\label{fig: oscillation}
\end{figure}

While the identity method is simpler and computationally superior to the diagonal method, it has a drawback. If system $G(z)$ has  under-damped dynamics, then this method would cause large oscillation in the output, even if the plant model is known precisely. To illustrate, we select $q=0.05$ in the example of Section \ref{sec:diag} and change $G_{11}(z)$ in $G(z)$ to the underdamped system: $G_{11}(z)=\frac{0.54z-0.49}{z^{2}-1.85z+0.9}$. A comparison between the outputs of this system after applying DRG with the diagonal and identity methods is shown in Figure~\ref{fig: oscillation}. It  can be seen that the constraints are  satisfied for both outputs. However, unlike the diagonal method, the output using the identity method has large oscillations.

The reason for this behavior can be explained as follows. Because $G(z)$ has slow under-damped dynamics, and since $F^{-1}(z) = G(z)$ for the identity method, applying a step to $r(t)$ causes oscillatory response in $r'(t)$. Viewing DRG as saturations in this case, $v(t)$ is computed as $r'(t)$ clipped at the constraints. Finally, since $W(z)$ is an identity matrix or identity matrix with some time delays, these oscillations will directly show up at the output $y(t)$.

Because of the above shortcoming, it is recommended, before selecting a specific decoupling method, to perform an analysis of the system dynamics similar to the above.

\subsection{Analysis of DRG-tf}\label{sec:analysisDRGtf}
In this section, we present an analysis of DRG-tf, both in  steady-state and transient.

\subsubsection{Steady-State analysis} 

Recall the steady-state constraint in \eqref{eq: O_inf steady state} for a generic system. The steady-state constraint for $O_{\infty}^{W_{ii}}$ can be defined similarly. In order to study the steady-state admissible inputs, we consider the projection of the steady-state constraint onto the $v_i$ coordinate, which results in:
\begin{equation}\label{eq: steady state O_ii}
V_{ss}^{W_{i,i}}:= \{v_i \in \mathbb{R}:W_{{ii}_0}v_i \in \mathbb{Y}_{i,ss}\}
\end{equation}
where $W_{{ii}_0} \in \mathbb{R}$ is the DC gain of subsystem $W_{ii}$ and $\mathbb{Y}_{i,ss}=(1-\epsilon)\mathbb{Y}_i$ (recall that $\mathbb{Y}_i$ is the constraint set for $y_i$). Since $W$ is diagonal, it follows that the steady-state constraint-admissible input set for $W$ is:
\begin{equation}\label{eq: steady state V}
V_{ss}^{W}:=V_{ss}^{W_{1,1}} \times V_{ss}^{W_{2,2}} \times \cdots \times V_{ss}^{W_{m,m}}
\end{equation}
We now compare the above set with the steady-state constraint-admissible input set of system $G$ (projected onto the $u$ coordinate), which arises in VRG applications. This set, noted by  $U_{ss}$, is defined by:
\begin{equation}\label{eq: steady state O_inf}
U_{ss}:=\{u \in \mathbb{R}^m: G_{0}u \in \mathbb{Y}_{ss}\}.
\end{equation}
From the above, the following theorems emerge. Note that Theorems \ref{thm: steady state} and \ref{thm: bounded input} below are also presented in the conference version of this paper (see \cite{DRG2018}). Therefore, we will not present the proofs for brevity, but will present the theorem statements for the sake of completeness.

\begin{theorem}\label{thm: steady state}
For the system of Figure~\ref{fig:Decoupled with RG}, and $U_{ss}$ and $V_{ss}^{W}$ defined in \eqref{eq: steady state V} and \eqref{eq: steady state O_inf}, the following relation holds
\begin{equation}\label{eq: thm 1}
V_{ss}^W=F_0^{-1} \times U_{ss},
\end{equation}
where $F_0$ is the DC gain of $F(z)$ and the operation $F_0^{-1} \times U_{ss}$ is the point-by-point mapping of the set $U_{ss}$ through $F_0^{-1}$.
\end{theorem}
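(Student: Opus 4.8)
The plan is to peel back the definitions of both sides of \eqref{eq: thm 1} and reduce the claimed identity to the single algebraic fact that the DC gains compose, $W_0 = G_0 F_0$, together with the observation that the tightened constraint set $\mathbb{Y}_{ss}$ inherits the Cartesian-product structure of $\mathbb{Y}$. First I would record the key facts about DC gains. Since $W(z) = G(z) F(z)$ and the DC gain is the evaluation at $z = 1$, we have $W_0 = G_0 F_0$; moreover $W(z)$ is diagonal by construction, so $W_0 = \mathrm{diag}(W_{11_0}, \ldots, W_{mm_0})$. Any time delays $z^{-\beta}$ appended to $F(z)$ or $F^{-1}(z)$ to enforce properness contribute a factor $1$ at $z = 1$ and hence do not affect $F_0$ or $W_0$. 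I would also note that $F_0$ is nonsingular: under Assumption \ref{A: NMP}, $G(z)$ and hence $G_0$ are invertible, and $W_0$ is invertible for both decoupling methods ($W_0 = I_m$ for the identity method, and $W_0 = \mathrm{diag}(G_{11_0}, \ldots, G_{mm_0})$ for the diagonal method, nonsingular whenever each diagonal channel has nonzero DC gain), so $F_0 = G_0^{-1} W_0$ is well defined and nonsingular, which is also what makes the point-by-point image $F_0^{-1} \times U_{ss}$ meaningful.

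Next I would unwind the left-hand side. By \eqref{eq: steady state V}, $v \in V_{ss}^W$ if and only if $v_i \in V_{ss}^{W_{i,i}}$ for every $i$; by \eqref{eq: steady state O_ii} this means $W_{ii_0} v_i \in \mathbb{Y}_{i,ss}$ for every $i$. Stacking these $m$ scalar conditions and using $\mathbb{Y}_{ss} = \mathbb{Y}_{1,ss} \times \cdots \times \mathbb{Y}_{m,ss}$ — which holds because $\mathbb{Y} = \mathbb{Y}_1 \times \cdots \times \mathbb{Y}_m$ and the tightening by the common factor $1-\epsilon$ acts componentwise — this is equivalent to $W_0 v \in \mathbb{Y}_{ss}$, where $W_0 = \mathrm{diag}(W_{11_0}, \ldots, W_{mm_0})$ as above.

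Finally I would substitute the factorization and read off the right-hand side: $W_0 v = G_0 (F_0 v)$, so $v \in V_{ss}^W$ if and only if $G_0 (F_0 v) \in \mathbb{Y}_{ss}$, which by \eqref{eq: steady state O_inf} is precisely $F_0 v \in U_{ss}$, i.e. $v \in F_0^{-1} \times U_{ss}$. Since every step is an equivalence, the two sets coincide. The main obstacle is a mild one: it is entirely the bookkeeping around the DC gains — checking that the properness delays are invisible at $z = 1$, that $\mathbb{Y}_{ss}$ genuinely factors with one shared $\epsilon$, and that $F_0$ is nonsingular so the claimed set image is well posed — with no deeper difficulty beyond unwinding the definitions.
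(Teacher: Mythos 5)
Your proof is correct. The paper omits the proof of this theorem (deferring to its conference version), but your argument—reduce both sets to DC-gain membership conditions, use $W_0 = G_0 F_0$ (noting the properness delays vanish at $z=1$ and that $\mathbb{Y}_{ss}$ factors componentwise), and read off the equivalence $G_0(F_0 v)\in\mathbb{Y}_{ss}\iff F_0 v\in U_{ss}$—is exactly the strategy the paper uses for the parallel state-space result (Theorem~\ref{thm: steady state ss}), so this is essentially the intended proof.
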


An important implication of this theorem is as follows.  If $r$ is not admissible with respect to system $G$ (i.e., $r\notin U_{ss}$), then $r'$ (see Figure~\ref{fig:Decoupled with RG}) must also not be admissible with respect to the system $W$ (i.e., $r'\notin V_{ss}^{W}$). 

The sets \eqref{eq: steady state V} and \eqref{eq: steady state O_inf} describe the steady-state operations of DRG-tf and VRG, respectively. Note that VRG solves a QP whereas DRG-tf solves an LP. This implies that, for non-admissible references, DRG-tf finds a solution on a vertex of $V_{ss}^{W}$, or from Theorem~ \ref{thm: steady state}, a vertex of $U_{ss}$. On the other hand, VRG finds a solution that may or may not be at a vertex of $U_{ss}$. Therefore, DRG-tf leads to a suboptimal solution with respect to the objective function of VRG.  In the following theorem, we show this more clearly by finding the explicit expression of $v$ computed by DRG-tf at steady-state. For this theorem, recall that the constraint on $y_i(t)$ has the form $\underline{s} \leq y_i \leq \bar{s}$.

\begin{figure}
\begin{center}
\includegraphics[scale=0.2]{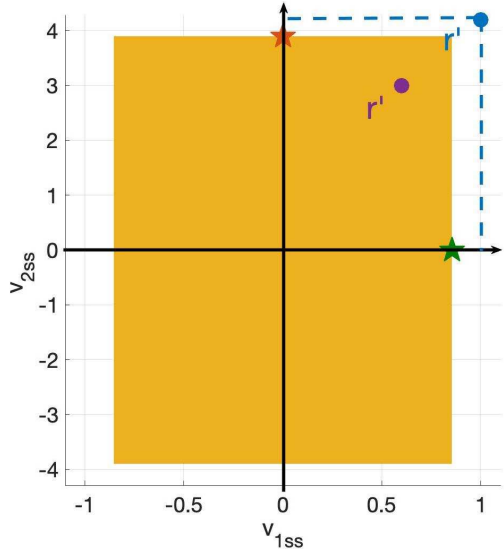}
\caption{Steady-state admissible set for $v_1$ and $v_2$.}
\label{fig: steady state}
\end{center}
\end{figure}

\begin{theorem}\label{thm: thm4}
 For the system of Figure~\ref{fig:Decoupled with RG}, at steady state:
\begin{equation}\label{eq: solution for v_i}
v_i=\emph{\textbf{sat}}(F_{0_i}^{-1}r)
\end{equation}
where $\emph{\textbf{sat}}$ refers to the saturation operator with bounds $W_{ii_{0}}^{-1}\bar{s}$ and $W_{ii_{0}}^{-1}\underline{s}$, and $F_{0_i}^{-1}$ represents the DC gain of the $i$-th row of $F^{-1}$.
\end{theorem}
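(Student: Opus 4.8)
The plan is to combine Theorem~\ref{thm: steady state} with the structure of the DRG-tf linear programs \eqref{eq:kappa for DRG} at steady state. First I would argue that, for a constant reference $r(t)=r$, the signal $r'(t)=F^{-1}(z)r$ converges to $r' = F_0^{-1}r$, so that $r_i' \to F_{0_i}^{-1}r$ (the $i$-th row of $F_0^{-1}$ applied to $r$). Each SRG$_i$ in \eqref{eq:diagonalRG}--\eqref{eq:kappa for DRG} then acts, asymptotically, as a scalar reference governor whose constant reference is $F_{0_i}^{-1}r$ and whose admissible interval for $v_i$ is the projection $V_{ss}^{W_{i,i}}$ of $O_\infty^{W_{ii}}$ onto the $v_i$-coordinate. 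By \eqref{eq: steady state O_ii}, $V_{ss}^{W_{i,i}} = \{v_i : W_{ii_0}v_i \in \mathbb{Y}_{i,ss}\}$, and since $\mathbb{Y}_i$ (hence $\mathbb{Y}_{i,ss}$) is the interval $[\underline{s},\bar{s}]$ by Assumption~\ref{A: intervals} and $W_{ii_0}$ is a positive scalar DC gain, this projection is exactly the interval with endpoints $W_{ii_0}^{-1}\underline{s}$ and $W_{ii_0}^{-1}\bar{s}$ (up to the $\epsilon$-tightening, which I would carry along or suppress by the abuse of notation already adopted in Section~\ref{sec:srg}).

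Next I would invoke the standard steady-state behavior of the scalar reference governor: for a constant reference, the governed signal converges to the Euclidean projection of that reference onto the steady-state admissible set, which for a one-dimensional interval constraint is simply the saturation of the reference at the interval endpoints. Concretely, SRG$_i$ drives $v_i$ to the point of $V_{ss}^{W_{i,i}}$ closest to $F_{0_i}^{-1}r$; because $V_{ss}^{W_{i,i}}$ is the interval $[W_{ii_0}^{-1}\underline{s},\, W_{ii_0}^{-1}\bar{s}]$, that closest point is $\textbf{sat}(F_{0_i}^{-1}r)$ with the stated bounds. This gives \eqref{eq: solution for v_i}. I would justify the "closest point" claim from the recursive structure \eqref{eq:diagonalRG}: $v_i$ is monotone along the segment from $v_i(t-1)$ toward $r_i'(t)$, $\kappa_i$ is taken maximal subject to staying in $O_\infty^{W_{ii}}$, and the fixed point of this iteration for constant $r_i'$ is precisely the clipped value (if $F_{0_i}^{-1}r$ lies inside the interval, eventually $\kappa_i=1$ and $v_i\to F_{0_i}^{-1}r$; if it lies outside, $v_i$ increases/decreases until it hits the nearer endpoint and then stays).

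The one subtlety — and the step I expect to require the most care — is the interplay between the asymptotic value of $r'(t)$ and the asymptotic behavior of the SRG: one must ensure that using the \emph{limit} $F_{0_i}^{-1}r$ in place of the \emph{time-varying} $r_i'(t)$ is legitimate, i.e., that the SRG's steady-state output depends only on the limit of its reference and not on the transient. This follows from the stability and recursive-feasibility properties of SRG noted in the Remark after \eqref{eq:kappa for DRG} (convergence of $r'$ implies convergence of $v$), together with continuity of the saturation map; I would state this as the crux and fill in the short argument. Everything else — the scalar reduction of the LP, the identification of the projected set as an interval, and the saturation formula — is then routine, and the theorem follows.
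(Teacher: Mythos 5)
Your proposal is correct and follows essentially the same route as the paper: identify $r_i'$ with $F_{0_i}^{-1}r$ at steady state, note that the steady-state admissible set for $v_i$ reduces to the interval with endpoints $W_{ii_0}^{-1}\underline{s}$ and $W_{ii_0}^{-1}\bar{s}$, and conclude that the SRG either passes $r_i'$ through unchanged or clips it at the nearer endpoint, i.e., saturates it. The paper's proof is terser (it does not spell out the convergence-of-$r'$ subtlety or which endpoint is attained), so your added care is fine but not a different argument.
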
 
\begin{proof}
At steady state, $\underline{s} \leq W_{ii_{0}}v_i \leq \bar{s}$ since $y_i=W_{ii_{0}}v_i$.  If $r_i'$ is constraint admissible, i.e., $\underline{s} \leq W_{ii_{0}}r'_i \leq \bar{s}$, then from \eqref{eq:kappa for DRG} $v_i=r'_i$. Otherwise, $v_i$ would either be equal to $W_{ii_{0}}^{-1}\bar{s}$ or equal to $W_{ii_{0}}^{-1}\underline{s}$. Combining the fact that $r_i'=F_{0_i}^{-1}r$ at steady state, the result follows.
\end{proof}
This theorem shows that if $r_i'$ is in the steady-state admissible set for $v_i$ (i.e., $V_{ss}^{W_{i,i}}$), then $v_i=r_i'$. If this holds for all $i$, then $u=r$ since $F_0F_0^{-1}=I_m$. If $r_i'$ is not in $V_{ss}^{W_{i,i}}$, then $v_i$ can be calculated explicitly as shown in \eqref{eq: solution for v_i}, which means that, from $u=F_0v$, $u$ also can be computed explicitly at steady-state.

We now use an example to show the geometric  interpretation of this theorem. Consider the same example as shown in \eqref{eq: illustrative example1} with $q=0.05$. As before, the constraints are defined by $-1.2 \leq y_1 \leq 1.2$ and $-3.9 \leq y_2 \leq 3.9$. The steady-state admissible set for $v_1$ and $v_2$ is shown by the orange region in Figure~\ref{fig: steady state}. If $r':=(r_1',r_2')=(1,4.2)$ (shown by the blue dot in Figure~\ref{fig: steady state}), which is outside of the admissible set, then $v_1$ is given by the closest point along the $v_{1_{ss}}$ axis to  $r_1'$ (green star in Figure~\ref{fig: steady state}). Similarly, $v_2$ is given by the closest point along $v_{2_{ss}}$ axis to  $r_2'$ (red star in Figure~\ref{fig: steady state}). If $r'$ is in the admissible set (purple dot in Figure~\ref{fig: steady state}), then $v=r'$ and $u=r$ at steady-state.

As previously mentioned, a requirement for DRG is that the plant input, $u$, and the setpoint, $r$, should be equal if no constraint violation is predicted, and that they should be as close as possible if constraint violation is predicted. This is to ensure that the degradation of tracking performance is minimal. We note that each SRG in Figure~\ref{fig:Decoupled with RG} ensures that $v_i$ and $r_i'$ are close; however, $u$ and $r$ may be far. In the following theorem, we show that, at steady-state, the closeness of $u$ and $r$ and, hence, the performance of DRG-tf, depends on the decoupling filter, $F(z)$.

\begin{theorem}\label{thm: bounded input}
Given the system of Figure~\ref{fig:Decoupled with RG}, at steady-state, we have that: 
$$
\|F_0^{-1}\|^{-1} \|v-r'\| \le \|u-r\| \le \|F_0\| \|v-r'\|
$$
where $\|\cdot\|$ refers to any vector norm and its associated induced matrix norm.
\end{theorem}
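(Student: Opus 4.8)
The plan is to derive everything from the single steady-state identity $u - r = F_0(r' - v)$, which is stated in the text just before Theorem~\ref{thm: thm4} (recall: at steady-state $u \to F_0 v$ and $r' \to F_0^{-1} r$, hence $u - r = F_0 v - F_0 F_0^{-1} r = F_0(v - r')$, equivalently $r - u = F_0(r' - v)$). Once this relation is in hand, both inequalities are immediate applications of the submultiplicativity of an induced matrix norm together with its compatibility with the associated vector norm.

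First I would establish the upper bound. Writing $u - r = F_0(v - r')$ and applying the defining inequality of an induced norm, $\|F_0 x\| \le \|F_0\|\,\|x\|$ with $x = v - r'$, gives $\|u - r\| \le \|F_0\|\,\|v - r'\|$, which is the right-hand inequality. For the lower bound, I would start instead from $v - r' = F_0^{-1}(u - r)$, which follows by left-multiplying the same identity by $F_0^{-1}$ (this uses that $F_0$ is invertible, which is guaranteed since $F(z)$ comes from $G^{-1}(z)$ times a diagonal matrix and $G$ is invertible by Assumption~\ref{A: NMP}). Then $\|v - r'\| = \|F_0^{-1}(u - r)\| \le \|F_0^{-1}\|\,\|u - r\|$, and rearranging yields $\|F_0^{-1}\|^{-1}\|v - r'\| \le \|u - r\|$, the left-hand inequality. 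Chaining the two gives the claimed sandwich.

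I would present this cleanly by first stating the steady-state relation $r - u = F_0(r' - v)$ as the key step (with a one-line justification pointing back to the discussion preceding Theorem~\ref{thm: thm4}), then deriving the two bounds in two short displayed lines, taking care to note that $\|\cdot\|$ denotes an arbitrary vector norm and the induced matrix norm it generates, so that both $\|F_0 x\| \le \|F_0\|\|x\|$ and $\|F_0^{-1} x\| \le \|F_0^{-1}\|\|x\|$ are valid.

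Honestly, there is no real obstacle here: the only thing to be careful about is making explicit that the norm on matrices is the one \emph{induced} by the chosen vector norm (otherwise the compatibility inequality $\|F_0 x\| \le \|F_0\|\|x\|$ need not hold), and that $F_0$ is invertible so $\|F_0^{-1}\|$ is well-defined. If one wanted to avoid invoking invertibility of $F_0$ explicitly, one could instead note that it already follows from Theorem~\ref{thm: steady state}, where $F_0^{-1}$ appears in the statement. So the proof is essentially a two-line estimate once the steady-state identity is recorded.
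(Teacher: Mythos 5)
Your proof is correct and is essentially the argument the paper intends: the paper omits the proof of this theorem (deferring to the conference version), but it gives exactly this two-line argument for the parallel DRG-ss result (Theorem~\ref{thm: steady state ss2}), with $\Gamma$ playing the role of $F_0$, starting from $u-r=\Gamma(v-r')$ and applying the induced-norm inequality in both directions. Your steady-state identity $u-r=F_0(v-r')$ and the subsequent bounds are the right key step, so nothing further is needed.
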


This theorem shows that $\|u-r\|$ is  bounded above and below by $\|v-r'\|$ scaled by the induced norms of $F_0$ and $F_0^{-1}$, which are known a-priori. More specifically, if $\|F_0\|$ is small, then small $\|v-r'\|$ implies small $\|u-r\|$, which is desirable. Also, if $\|F_0^{-1}\|^{-1}$ is large, then small $\|v-r'\|$ implies large $\|u-r\|$, which is undesirable. In the case of large $\|F_0\|$ or small $\|F_0^{-1}\|^{-1}$, no  conclusion can be made.  

Note that if 2-norm is chosen, then $\|F_0\| = \Bar{\sigma} (F_0)$, where $\Bar{\sigma}(F_0)$ is the largest singular value of $F_0$. Similarly, $\|F_0^{-1}\|^{-1} = \underline{\sigma}(F_0)$, where $\underline{\sigma}(F_0)$ is the smallest singular value of $F_0$. Therefore, 
$$
\underline{\sigma}(F_0) \|v-r'\|_2 \le \|u-r\|_2 \le \Bar{\sigma} (F_0) \|v-r'\|_2.
$$
Since $\|u-r\|_2$ is exactly the objective function in VRG optimization,  the above shows that the performance of DRG-tf and VRG will be close if $F_0$ has small singular values. Note that the quantify $\|v-r'\|$ depends on the value of $r$ and can be computed from Theorem \ref{thm: thm4}.

Finally, note that if the identity decoupling method is implemented, then $F=G^{-1}$. Hence, using Theorem \ref{thm: bounded input}, the following relation follows:
\begin{equation}\label{thm: thm2_2}
\|G_0\|^{-1}\|v-r'\| \le \|u-r\| \le \|G_0^{-1}\| \|v-r'\|,
\end{equation}which allows us to study closeness of $u$ and $r$ using the original system $G(z)$ instead of filter $F(z)$.

\subsubsection{Transient Analysis} 
Here, we extend the steady-state results of the previous section and study the transient performance of DRG-tf. The analysis of this section relies on the ${H_\infty}$ and $L_1$ norm of $F(z)$. Because of the delays introduced in $F(z)$ and/or $F^{-1}(z)$ to make them proper, care must be taken in interpreting the results, as we show below.

\begin{theorem} \label{thm: transient tf}
For the system of Figure~\ref{fig:Decoupled with RG}, the following relationship holds:
\begin{equation}\label{eq: thm 3}
\|u(t+\beta_1)-r(t-\beta_2)\|_{L_2} \leq \|F\|_{H_\infty}\|v-r'\|_{L_2} 
\end{equation}
where $\beta_1$ and $\beta_2$ are the number of delays added to make $F$ and $F^{-1}$ proper, respectively.
\end{theorem}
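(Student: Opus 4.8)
The plan is to work in the $\mathcal{Z}$-domain and exploit the fact that $u$ and $v$, and $r$ and $r'$, are related by the filters $F(z)$ and $F^{-1}(z)$ respectively. From Figure~\ref{fig: drg-tf system block} we have $U(z) = F(z) V(z)$ and $R'(z) = F^{-1}(z) R(z)$, where, to make the transfer functions proper, $F(z)$ carries $\beta_1$ delays and $F^{-1}(z)$ carries $\beta_2$ delays. Write $F(z) = z^{-\beta_1}\tilde F(z)$ and $F^{-1}(z) = z^{-\beta_2}\tilde F^{-1}(z)$, so that $\tilde F(z)\tilde F^{-1}(z) = z^{\beta_1+\beta_2}\cdot z^{-\beta_1-\beta_2} I$... more precisely the undelayed parts satisfy $\tilde F \tilde F^{-1} = I$ in the sense that the nominal decoupling filter and its inverse cancel. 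The key identity I would derive is
\begin{equation*}
U(z) - z^{\beta_2} R(z) = F(z)\bigl(V(z) - R'(z)\bigr),
\end{equation*}
since $F(z) R'(z) = z^{-\beta_1}\tilde F(z)\cdot z^{-\beta_2}\tilde F^{-1}(z) R(z) = z^{-\beta_1-\beta_2} R(z)$, and multiplying through by $z^{\beta_1}$ gives $z^{\beta_1} U(z) - z^{-\beta_2} R(z)$ on the left, i.e. in the time domain $u(t+\beta_1) - r(t-\beta_2)$.

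Next I would pass to the time domain: the relation above says that the signal $e(t) := u(t+\beta_1) - r(t-\beta_2)$ is the output of the (stable, by Assumption~\ref{A: system stable} and Assumption~\ref{A: NMP}) LTI system with transfer function $F(z)$ driven by the input $v(t) - r'(t)$. Then I would invoke the standard fact that for a stable LTI system, the $L_2$-induced gain equals the $H_\infty$ norm of the transfer function:
\begin{equation*}
\|e\|_{L_2} \le \|F\|_{H_\infty}\,\|v - r'\|_{L_2},
\end{equation*}
which is exactly \eqref{eq: thm 3}. For this to be meaningful I need $v - r'$ to lie in $L_2$; this follows from the Remark after \eqref{eq:kappa for DRG}: for constant $r$, $r'$ converges (stability of $F^{-1}$) and $v$ converges to $r'$ asymptotically (SRG theory), so the error is a decaying signal and square-summable. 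I would state this as a standing hypothesis or note it in passing.

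The main obstacle is bookkeeping with the delays: one must be careful that the $\beta_1$ and $\beta_2$ delays are distributed exactly as claimed so that the time-shift on the left-hand side of \eqref{eq: thm 3} comes out as $u(t+\beta_1) - r(t-\beta_2)$ and not some other combination, and that the ``extra'' $z^{\beta_1}$ factor applied to both sides is a pure (non-causal, but harmless for norm computations) shift that does not affect the $L_2$ norm. A secondary subtlety is that $F(z)$ in the DRG-tf scheme is the already-delayed (proper) filter, so $\|F\|_{H_\infty}$ on the right-hand side is the norm of the proper filter — I would make sure the statement and the derivation use the same object. Everything else — the $\mathcal{Z}$-domain factorization, the Parseval/$H_\infty$ gain bound — is routine, so the write-up would spend most of its words nailing down the delay algebra and the $L_2$ membership of $v-r'$.
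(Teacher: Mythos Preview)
Your proposal is correct and follows essentially the same route as the paper: the paper writes $U(z)=z^{-\beta_1}F(z)V(z)$ and $R'(z)=z^{-\beta_2}F^{-1}(z)R(z)$, obtains $z^{\beta_1}U(z)-z^{-\beta_2}R(z)=F(z)\bigl(V(z)-R'(z)\bigr)$, and then bounds the $H_2$ norm of the left side by $\|F\|_{H_\infty}\|V-R'\|_{H_2}$ via an explicit Parseval/frequency-integral computation, which is exactly the ``$L_2$-induced gain equals $H_\infty$ norm'' fact you invoke. One small slip to clean up: your displayed identity ``$U(z)-z^{\beta_2}R(z)=F(z)(V(z)-R'(z))$'' should read $U(z)-z^{-\beta_1-\beta_2}R(z)$ on the left (you implicitly fix this in the next sentence), and note that the paper does not address the $L_2$ membership of $v-r'$ that you flag --- so your caution there is extra care rather than a gap relative to the source.
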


\begin{proof}
By Parseval's theorem,  \(\|u-r\|_{L_2} = \|U-R\|_{H_2}\) and \(\|v-r'\|_{L_2} = \|V-R'\|_{H_2}\).
where $R'$, $R$, $U$, and $V$ are the $\mathcal{Z}$-transforms of $r'$, $r$, $u$, and $v$, respectively. From Figure~\ref{fig:Decoupled with RG} the following equations hold:
\begin{equation} \label{eq: relationship of u and r}
U(z) = \frac{1}{z^{\beta_1}} F(z)V(z), 
 R'(z)= \frac{1}{z^{\beta_2}}F(z)^{-1}R(z) 
 \end{equation}
Then, 
\[
\begin{aligned}
& \|z^{\beta_1}U(z)-z^{-\beta_2}R(z)\|_{H_2}^2
\\
& = \frac{1}{2\pi}\int_{-\pi}^{\pi}\|F(e^{jw})(V(e^{jw})-R'(e^{jw}))\|_2^2dw\\
& \leq \frac{1}{2\pi}\int_{-\pi}^{\pi}(\|F(e^{jw})\|_2\|V(e^{jw})-R'(e^{jw})\|_2)^2dw  \\
\end{aligned}
\]
where $\|.\|_2$ refers the Euclidean norm. Since $\|F\|_{H_\infty} = \max_{w}\Bar{\sigma}(F(e^{jw}))$, we have that: 
\begin{equation}
\begin{aligned}
 & \frac{1}{2\pi}\int_{-\pi}^{\pi}(\|F(e^{jw})\|_2\|V(e^{jw})-R'(e^{jw})\|_2)^2dw  \\
& \leq \|F\|_{H_\infty}^2\|V-R'\|_{L_2}^2
\end{aligned}
\end{equation}
By Parseval's theorem, the result follows.
\end{proof}
Note that \eqref{eq: thm 3} can be rewritten as: 
\[\|u(t)-r(t-\beta_2-\beta_1)\|_{L_2} \leq \|F\|_{H_\infty}\|v-r'\|_{L_2} \]
This equation shows that the average distance between $u$ and the delayed version of $r$ is bounded by the average distance between $v$ and $r'$ scaled by $\|F\|_{H_\infty}$. Thus, if $\|F\|_{H_\infty}$ is small, the DRG-tf and VRG will perform similarly in transient (although, DRG-tf will exhibit delays).

Since Theorem \ref{thm: transient tf} only discusses time averages, below we provide another theorem to show that the peak of the distance between $u$ and $r$ is related to $\|f\|_{L_1}$, where $f$ is the impulse  response  matrix of  $F(z)$ and $\|f\|_{L_1}$ refers to the $L_1$ norm of $f$. 

\begin{theorem} \label{thm: transient-ss2}
For the system of Figure~\ref{fig:Decoupled with RG}, the following relationship holds with respect to the $L_1$ norm:
\begin{equation}\label{theorem: thm 4}
\|u(t+\beta_1)-r(t-\beta_2)\|_{L_\infty} \leq \|f\|_{L_1}\|v-r'\|_{L_\infty}
\end{equation}

\end{theorem}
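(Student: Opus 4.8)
The plan is to mirror the proof of Theorem~\ref{thm: transient tf}, replacing the Parseval/$H_\infty$ argument by the standard fact that the induced $L_\infty \to L_\infty$ gain of a discrete-time LTI system equals the $L_1$ norm of its impulse-response matrix (which is exactly the quantity $\|f\|_{L_1}$ defined in the Preliminaries).

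First I would rewrite the left-hand side as a single convolution. From \eqref{eq: relationship of u and r} we have $U(z)=z^{-\beta_1}F(z)V(z)$ and $R'(z)=z^{-\beta_2}F(z)^{-1}R(z)$; the latter rearranges to $R(z)=z^{\beta_2}F(z)R'(z)$. Hence the $\mathcal{Z}$-transform of $u(t+\beta_1)$ is $z^{\beta_1}U(z)=F(z)V(z)$, and that of $r(t-\beta_2)$ is $z^{-\beta_2}R(z)=F(z)R'(z)$. Subtracting, the $\mathcal{Z}$-transform of $u(t+\beta_1)-r(t-\beta_2)$ equals $F(z)\bigl(V(z)-R'(z)\bigr)$, so in the time domain $u(t+\beta_1)-r(t-\beta_2)=\bigl(f*(v-r')\bigr)(t)$, where $f$ is the impulse-response matrix of $F(z)$ and $*$ denotes convolution.

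Second, I would apply the standard peak-gain estimate. Writing the $i$-th component as $\sum_{j=1}^{m}\sum_{\tau=0}^{\infty}f_{ij}(\tau)\,(v_j-r'_j)(t-\tau)$, the triangle inequality together with $|(v_j-r'_j)(t-\tau)|\le\|v-r'\|_{L_\infty}$ gives
\[
\Bigl|\bigl(f*(v-r')\bigr)_i(t)\Bigr|\;\le\;\Bigl(\sum_{j=1}^{m}\sum_{\tau=0}^{\infty}|f_{ij}(\tau)|\Bigr)\,\|v-r'\|_{L_\infty}.
\]
Taking the supremum over $t$ and the maximum over $i$, and invoking the definition of $\|f\|_{L_1}$, yields \eqref{theorem: thm 4}. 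As in the remark after Theorem~\ref{thm: transient tf}, I would then note that this is equivalent to $\|u(t)-r(t-\beta_1-\beta_2)\|_{L_\infty}\le\|f\|_{L_1}\|v-r'\|_{L_\infty}$, i.e., the peak-distance counterpart of the time-averaged bound in Theorem~\ref{thm: transient tf}.

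The only thing requiring care is the \emph{delay bookkeeping}: one must track the $z^{\beta_1}$ and $z^{\beta_2}$ factors so that they cancel exactly and leave $F(z)$ acting on $V-R'$ with no residual shift. The remaining $\ell_1$-gain estimate is entirely routine, so I do not anticipate a genuine obstacle. (One should also observe that $v-r'$ is a bounded sequence, so that $\|v-r'\|_{L_\infty}<\infty$; this follows from the stability and recursive-feasibility properties discussed after \eqref{eq:kappa for DRG}.)
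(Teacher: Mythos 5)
Your proposal is correct and follows essentially the same route as the paper's proof: both reduce $u(t+\beta_1)-r(t-\beta_2)$ to the convolution $f*(v-r')$ using the delay bookkeeping in \eqref{eq: relationship of u and r}, and then apply the triangle inequality and the definition of $\|f\|_{L_1}$ to obtain the peak-gain bound. The only cosmetic difference is that the paper writes the convolution sum over $\tau\in(-\infty,\infty)$ rather than $[0,\infty)$, which is immaterial for a causal impulse response.
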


\begin{proof}
Based on the inverse $\mathcal{Z}$-transform of \eqref{eq: relationship of u and r}, we have:
\begin{equation}\label{eq6}
\begin{aligned}
&\;\;\;\;\;\; |u_i(t+\beta_1)-r_i(t-\beta_2)|=|f_i(t) * v(t)-f_i(t) * r'(t)| \\
&=|\sum_{\tau=-\infty}^{\infty}\sum_{j=1}^{m}f_{ij}(\tau)(v_j(t-\tau)-r'_j(t-\tau))| \\
&\leq \sum_{\tau=-\infty}^{\infty}\sum_{j=1}^{m}|f_{ij}(\tau)(v_j(t-\tau)-r'_j(t-\tau))| \\
&\leq \|v-r'\|_{L_\infty}\sum_{\tau=-\infty}^{\infty}\sum_{j=1}^{m}|f_{ij}(\tau)|
\end{aligned}
\end{equation}
where $f_{ij}$ refers to the $ij$-th element of $f$,  $*$ denotes the convolution operator, and in the last inequality, we have used the fact that $\|v-r'\|_{L_\infty}$ is the maximal value of $|v_j(t)-r'_j(t)|$ over $j$ and over $t$. 
Taking the maximum of both sides of the above with respect to $i$, we get:
\begin{equation}
\begin{aligned}
\max_{i}|u_i(t+\beta_1)-r_i(t-\beta_2)| \leq \|f\|_{L_1}\|v-r'\|_{L_\infty}\\
\end{aligned}
\end{equation}
and the result follows.
\end{proof}

This theorem implies that if $\|f\|_{L_1}$  is small, then the DRG-tf will perform similarly to VRG in transient. If, however, $\|f\|_{L_1}$ is large, no conclusion can be drawn.

\section{Decoupled Reference Governor Based on State Feedback Decoupling: DRG-ss} \label{section:drg_ss}
In this section, we will introduce DRG-ss and its corresponding steady-state and transient analyses. 
Because DRG-ss uses state feedback decoupling, we assume that  all the states are known or measured. If this is not the case, a standard observer can be designed, which we will not discuss in this paper for the sake of brevity. 
The following assumptions are made for the development of the theory presented in this section:

\begin{assumption}\label{A: system stable2}
 Similar to A. \ref{A: system stable}, 
system $G(z)$ in Figure~\ref{fig: system block ss} is asymptotically stable.
\end{assumption}

\begin{assumption}\label{A: B*}
$B^*$ matrix  in \eqref{eq:B_star} is nonsingular. 
\end{assumption}

In addition, we assume that the sets $\mathbb{Y}_i$ satisfy Assumption A. \ref{A: intervals} in Section \ref{sec: drg-tf}.

Consider the system in Figure~\ref{fig: system block ss}, where we have applied the state feedback decoupling method to get a diagonal system, $W$, which has state space form $(\bar{A},\bar{B},C,0)$ given by
\eqref{eq: Astar and Bstar}. Note that the feedthrough matrix $D$ is taken to be 0 as discussed in Section \ref{Sec: feedback decoupling}, but this assumption can be relaxed. A state-space realization for each decoupled subsystem, $W_{ii}$, is given by: ($\bar{A}$, $\bar{B}(:,i)$, $C(i,:)$, 0), where $\bar{B}(:,i)$ is the $i$-th column of $\bar{B}$, and $C(i,:)$ is the $i$-th row of $C$. Next, for each decoupled subsystem, we compute the MAS, denoted by $O_{\infty}^{W_{ii}}$, as:
\begin{equation}\label{eq:O_inf for DRG_ss}
\begin{aligned}
O_{\infty}^{W_{ii}}:=&\{(x_{w_0},v_{i_0})\in \mathbb{R}^{n+1}: x_{w_0}=x_w(0),\\
& v_i(t)=v_{i_0}, y_i(t)\in \mathbb{Y}_i, \forall t \in \mathbb{Z}_+ \}
\end{aligned}
\end{equation}
where $x_w$ represents the state of $W$. Note that in comparison with DRG-tf, which, depending on the observer design method, may use the states of $W_{ii}$ or $W$ to create $O_\infty^{W_{ii}}$, DRG-ss uses the states of $W$ to create  $O_\infty^{W_{ii}}$.

As for implementation, the SRGs within DRG-ss compute the inputs, $v_i$, to the  decoupled system the same as \eqref{eq:diagonalRG} and $\kappa_i$ is computed by the same linear program as \eqref{eq:kappa for DRG}. Note  that, for the identity decoupling method, the construction of MAS is similar to that of DRG-tf with identity method(see \eqref{eq:delay MAS}); that is, the SRGs can be replaced by a bank of decoupled saturation functions. 

Because of the additional feedback loop (i.e., $-\Phi x$ shown in Figure~\ref{fig: system block ss}), the stability of DRG-ss is not guaranteed (unlike DRG-tf). Below, we provide a sufficient condition for  stability of the DRG-ss scheme.
\begin{figure}
    \centering
        \begin{tikzpicture}
        [node distance=1cm,>=latex']
        \node [input, name=input] {};
        \node [block,minimum size=0.8cm, right=0.8 of sum] (RG) {$Q(z)$};
        \node [block, below of=RG] (feedback) {$SRG$};
        \node [output, right =1cm of RG] (state1) {};
        \node [sum, below=0.8cm of state1] (sum1) {};
        \node [output, below of=state1] (state2) {};
        \node [block,minimum size=0.8cm, right=1cm of sum1] (Ginv) {$\Gamma^{-1}$};
        \node [output, right=1 cm of Ginv] (reference) {};
        \draw [draw,->] (input) -- node[]{} (RG);
        
        \draw [-] (RG) -- node [above,pos=0.79]{}(state1);

        \draw [-] (feedback) -| node[] {} 
        node [near end] {} (input);
        \draw [->] (state1) -- node [left,pos=0.9] {$-$}(sum1);
        \draw [->] (sum1) -- node [] {}(feedback);
        \draw [->] (Ginv) -- node [above,pos=0.79] {$+$}(sum1);
        \draw [->] (reference) -- node [] {}(Ginv);
       
      \draw (1,-0.75)node [color=black,font=\fontsize{10}{10}\selectfont]{$v$};
       
      \draw (2.9,-0.75)node [color=black,font=\fontsize{10}{10}\selectfont]{$r'$};
       
      \draw (6,-0.75)node [color=black,font=\fontsize{10}{10}\selectfont]{$r$};
    \end{tikzpicture}
    \caption{Rearrangement of Figure~\ref{fig: system block ss}.}
    \label{fig: system block ss stable}
\end{figure}
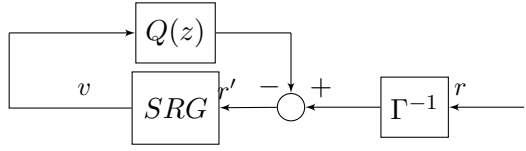 

The block diagram of DRG-ss (Figure~\ref{fig: system block ss}) can be rearranged as shown in Figure~\ref{fig: system block ss stable}, where  $$Q(z)=\Gamma^{-1} \Phi (I- G_x(z) \Phi)^{-1} G_x(z) \Gamma$$
and $G_x(z) = (z I - A)^{-1}B$. 
From Small Gain Theorem (\cite{nonlinearchen}), if there exist four constants $J_1$, $J_2$, $K_1$, and $K_2$, with $J_1 J_2 < 1$, such that:
\begin{equation} \label{eq: small gain theorem}
\centering
\begin{aligned}
     \|v\| \leq K_1 +J_1 \| r' \|, \;\;\;\;\;\;
    \|Q(z)v\| \leq K_2 +J_2 \| v \|
\end{aligned}
\end{equation}
then, the system is bounded input bounded output stable (i.e., BIBO). While $\|\cdot\|$ can be chosen to be any signal norm,  we use the $\infty$-norm in the discussion that follows. Recall that in the SRG optimization  \eqref{eq:kappa for DRG}, $\kappa_i$ satisfies: $0 \leq \kappa_i \leq 1$, which implies that: 
    \[
    \centering
    \begin{aligned}
    \|v(t)\| &= \|v(t-1)+ K(r'(t)-v(t-1)) \|_\infty \\
    & \leq \|(I-K)v(t-1)\|_\infty+\|Kr'(t)\|_\infty \\
    & \leq \|v(t-1)\|_\infty +
    \|r'(t)\|_\infty\\
    \end{aligned}\]
    where $K$ is diagonal matrix with $\kappa_i$ as its main-diagonal elements. Since $v$ is bounded (because $O_\infty^{W_{ii}}$ is compact, see \cite{Gilbert_1991}), we have that $\|v(t-1)\|_\infty \leq M$ for some $M>0$. Thus, 
    \(
    \|v(t)\|_\infty  \leq M + \|r'\|_\infty 
    \)
    (i.e., $J_1 = 1$, $K_1 = M$).  Then, from small gain theorem, the system is BIBO stable if there exist a $K_2$ and $J_2<1$, such that: 
  $
    \|Q(z)v\|_\infty \leq K_2 +  J_2 \| v 
    \|_\infty.
    $
    Recall that the induced system norm  $\|q\|_{L_1}$, where $q$ is the impulse response matrix of $Q(z)$, is defined as:
    \(\|q\|_{L_1} = \sup{\frac{\|Q v\|_\infty}{\|v\|_\infty}}\). 
Then, for $J_2$ to exist, the following inequality needs to be satisfied:
\[\|q\|_{L_1} < 1\]

In summary, the DRG-ss scheme is BIBO stable if $\|q\|_{L_1} < 1$. 
It is important to note that $Q(z)$ depends on $\Gamma$ and $\Phi$. Thus, stability must be checked after $\Phi$ and $\Gamma$ have been designed, which means that iterations might be needed if the stability condition above is not satisfied. Finally,  asymptotic stability can also be proved by applying the results from absolute stability (\cite{harris1983stability}) to the system of Figure~\ref{fig: system block ss stable} and using the fact that $0 \leq \kappa_i\leq 1$.

\begin{figure}
\centerline{\includegraphics[scale=0.45]{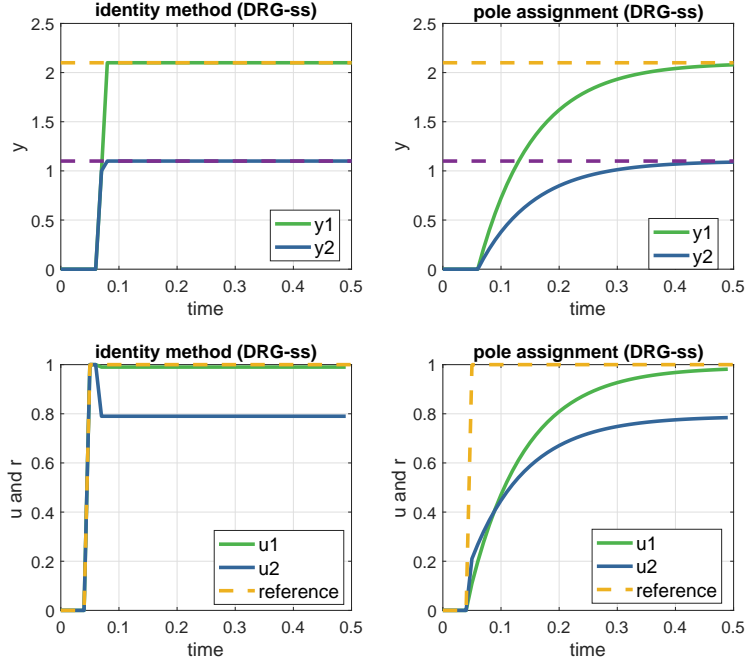}}

    \caption{Simulation results of DRG-ss. The purple and yellow dashed lines on the top two plots represent the constraints on the outputs.}
\label{fig.drgss}
\end{figure}

Next, we provide an example for DRG-ss, where the two decoupling methods in Section \ref{Sec: feedback decoupling} are  applied to decouple the system. Consider the system $G$ given by:
\begin{equation}\label{eq: example drg-ss}
A=\begin{bmatrix}
      0.1& 1& 0\\
      0 & 0.1& 0\\
      0 &0 &0.1\\
     \end{bmatrix},B=\begin{bmatrix}
      0&\quad 1\\
      1&\quad 0\\
      1&\quad 0\\
     \end{bmatrix},C=\begin{bmatrix}
      1 &\quad1 &\quad-1\\
      0 &\quad1 &\quad0\\
     \end{bmatrix}
     \end{equation}
We use \eqref{eq:pair1} and \eqref{eq:pair2} to find $\Phi$ and $\Gamma$, and proceed to compute  $O_{\infty}^{W_{11}}$ and $O_{\infty}^{W_{22}}$ based on \eqref{eq:delay MAS} and \eqref{eq:O_inf for DRG_ss} (for the identity and pole assignment methods, respectively). Note that for pole assignment method, we choose $M_k=\mathrm{diag}(0.9,0.9)$ to locate two of the poles of $W$ at $0.9$. The constraint set is defined as $\mathbb{Y}:=\{ (y_1, y_2): y_1 \leq 2.1, y_2 \leq 1.1 \}$. We simulate the response of this system to a step of size $1$ in both $r_1$ and $r_2$. The simulation results  are depicted in Figure~\ref{fig.drgss}.

Figure~\ref{fig.drgss} (top) shows that the outputs are within the constraints for both identity and pole assignment methods. Note, from the bottom plots of Figure~\ref{fig.drgss}, that there is a gap between $u$ and $r$. Later, we will investigate this gap.

As a final remark, similar to the identity method for DRG-tf, while the identity method for DRG-ss is simpler and computationally superior to the pole assignment method, it has a drawback: it may lead to large oscillations for underdamped systems.


\subsection{Analysis of DRG-ss}\label{Sec:analysis of FRG-ss}

In this section, we present steady-state and transient analyses of DRG-ss. 

\subsubsection{Steady-state Analysis} 
We begin by noting that the definitions of the steady-state halfspace for DRG-ss, i.e., $V_{ss}^W$, and VRG, i.e.,  $U_{ss}$, are the same as  \eqref{eq: steady state V} and \eqref{eq: steady state O_inf}. Below, we  present a theorem to relate $U_{ss}$ and $V_{ss}^W$, which parallels Theorem \ref{thm: steady state} for DRG-tf.

\begin{theorem}\label{thm: steady state ss}
For the system of Figure~\ref{fig: system block ss}, and $U_{ss}$ and $V_{ss}^{W}$ defined in \eqref{eq: steady state O_inf} and \eqref{eq: steady state V}, the following relation holds
\begin{equation}\label{eq: thm 5}
V_{ss}^{W}=C(I-\bar{A})^{-1}\bar{B} (C(I-A)^{-1}B)^{-1} \times U_{ss}
\end{equation}
where $\bar{A}=A + B \Phi$ and $\bar{B} = B \Gamma$. 
\end{theorem}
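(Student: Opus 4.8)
The plan is to mirror the structure of the proof of Theorem~\ref{thm: steady state} (DRG-tf) but using the state-feedback decoupling relations in place of the transfer-function ones. The essential fact is that the DC gain of the original system $G$ from $u$ to $y$ is $G_0 = C(I-A)^{-1}B$, and the DC gain of the decoupled system $W$ from $v$ to $y$ is $W_0 = C(I-\bar A)^{-1}\bar B$ with $\bar A = A+B\Phi$, $\bar B = B\Gamma$. Since $W$ is diagonal, $W_0 = \mathrm{diag}(W_{11_0},\ldots,W_{mm_0})$, and the set $V_{ss}^W$ defined in \eqref{eq: steady state V} is exactly $\{v: W_0 v \in \mathbb{Y}_{ss}\}$ (the Cartesian product of the scalar conditions $W_{ii_0}v_i \in \mathbb{Y}_{i,ss}$ collapses to this because $\mathbb{Y}_{ss} = \mathbb{Y}_{1,ss}\times\cdots\times\mathbb{Y}_{m,ss}$). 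Similarly $U_{ss} = \{u: G_0 u \in \mathbb{Y}_{ss}\}$ from \eqref{eq: steady state O_inf}.

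First I would establish the change of variables at steady state. Given $v\in V_{ss}^W$, set $u := W_0 G_0^{-1} v$, i.e. $u = C(I-\bar A)^{-1}\bar B\,(C(I-A)^{-1}B)^{-1} v$. Then $G_0 u = G_0 (W_0 G_0^{-1}) v$; but $G_0$ and $W_0$ are both $m\times m$ matrices and $G_0^{-1}$ commutes trivially here only if we are careful — actually the cleaner route is: $G_0 u = G_0 W_0 G_0^{-1} v$, which is \emph{not} obviously $W_0 v$. So instead I would argue directly on the constraint: $v\in V_{ss}^W \iff W_0 v \in \mathbb{Y}_{ss}$, and I want to show this is equivalent to $u\in U_{ss} \iff G_0 u\in\mathbb{Y}_{ss}$ under the map $u = (W_0 G_0^{-1}) v$. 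For that I need $G_0 u = W_0 v$, i.e. $G_0 (W_0 G_0^{-1}) v = W_0 v$ for all relevant $v$. This holds iff $G_0 W_0 G_0^{-1} = W_0$, i.e. iff $G_0$ and $W_0$ commute. That need not be true in general, so the map in the theorem must instead be $u = G_0^{-1} W_0 v$ read carefully — but the statement writes $V_{ss}^W = C(I-\bar A)^{-1}\bar B (C(I-A)^{-1}B)^{-1}\times U_{ss}$, i.e. $v = W_0 G_0^{-1} u$. The key identity I actually need is therefore $G_0 u = W_0 v$ whenever $v = W_0 G_0^{-1}u$, which again needs commutativity. The resolution is that the \emph{physical} steady-state relation from Figure~\ref{fig: system block ss} ties $u$ and $v$ together: at steady state $v$ is constant, hence $x$ is constant at $x_{ss} = (I-\bar A)^{-1}\bar B v$, and $u = \Phi x_{ss} + \Gamma v$, so $y_{ss} = C x_{ss} = W_0 v$ and also $y_{ss} = G_0 u$. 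Thus $G_0 u = W_0 v$ \emph{by construction of the block diagram}, and one solves for $v$ in terms of $u$: $v = W_0^{-1} G_0 u$ — wait, that is the inverse of what is claimed. Let me instead solve for $u$: $u = \Phi(I-\bar A)^{-1}\bar B v + \Gamma v$, and one checks algebraically that this equals $G_0^{-1} W_0 v$; equivalently $v = W_0^{-1} G_0 u$. Hmm — so the stated direction $v = W_0 G_0^{-1} u$ would require $W_0 G_0^{-1} = (G_0^{-1} W_0)^{-1}{}^{-1}$... I would need to recheck the paper's convention for the "$\times$" map direction against Theorem~\ref{thm: steady state}, where $V_{ss}^W = F_0^{-1}\times U_{ss}$ and $F_0 = G_0^{-1}W_0$ (from \eqref{eq:diagonal method}), so $F_0^{-1} = W_0^{-1} G_0$ and indeed $v = W_0^{-1}G_0 u$ there. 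So the analogous DRG-ss statement should read $v = W_0^{-1}G_0 u$, meaning the claimed matrix $C(I-\bar A)^{-1}\bar B(C(I-A)^{-1}B)^{-1}$ must in fact equal $W_0^{-1}G_0$; I would verify that $\Phi$ and $\Gamma$ from \eqref{eq:pair1}/\eqref{eq:pair2} indeed give $C(I-\bar A)^{-1}\bar B \cdot (C(I-A)^{-1}B)^{-1} = (C(I-\bar A)^{-1}\bar B)^{-1} C(I-A)^{-1}B$ — which would only hold if these matrices are scalar multiples of identity on the relevant subspace, OR the statement's matrix should be read as the inverse. I would resolve this bookkeeping by deriving the $u$–$v$ steady-state map from first principles (as above) and writing the theorem's matrix accordingly.

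The main steps, then, in order: (1) compute the steady state of the loop in Figure~\ref{fig: system block ss}: for constant $v$, $x_{ss}=(I-\bar A)^{-1}\bar B v$, $u_{ss} = \Phi x_{ss} + \Gamma v$, $y_{ss}=C x_{ss}=W_0 v = G_0 u_{ss}$; (2) conclude that $u_{ss}$ and $v$ are related by an invertible linear map $T$ with $G_0 T^{-1}(\cdot)\big|_{\text{um}} $... concretely $v = W_0^{-1}G_0 u_{ss}$ (equivalently the matrix written in the theorem, modulo the convention check); (3) observe $v\in V_{ss}^W \iff W_0 v \in \mathbb{Y}_{ss} \iff G_0 u_{ss}\in\mathbb{Y}_{ss} \iff u_{ss}\in U_{ss}$, using $W_0 v = G_0 u_{ss}$ from step (1); (4) therefore the point-by-point image of $U_{ss}$ under the map $u\mapsto W_0^{-1}G_0 u$ is exactly $V_{ss}^W$, which is the claimed identity. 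Invertibility of $G_0$ follows from Assumption~\ref{A: B*} (or the DRG-tf analogue Assumption~\ref{A: NMP}); invertibility of $W_0$ follows from asymptotic stability of each $W_{ii}$ plus nonzero DC gain, which is implicit in the decoupling construction.

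The main obstacle I expect is precisely this convention/bookkeeping about which direction the linear map goes and confirming that $C(I-\bar A)^{-1}\bar B \cdot (C(I-A)^{-1}B)^{-1}$ is the correct matrix — i.e. reconciling the state-feedback expression $u = \Phi(I-\bar A)^{-1}\bar B v + \Gamma v$ with the compact form. Once step (1)'s identity $W_0 v = G_0 u$ is in hand, the set equivalence in steps (3)–(4) is immediate and essentially identical to the proof of Theorem~\ref{thm: steady state}. A secondary, milder point is justifying that the componentwise product structure of $V_{ss}^W$ (from $W$ diagonal) is compatible with writing the condition as $W_0 v\in\mathbb{Y}_{ss}$ with $\mathbb{Y}_{ss}$ the Cartesian product — this is routine given Assumption~\ref{A: intervals}.
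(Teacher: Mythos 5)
Your route is the same as the paper's: the published proof simply records $G_0=C(I-A)^{-1}B$ and $W_0=C(I-\bar{A})^{-1}\bar{B}$, writes the (tautological) identity $W_0=W_0G_0^{-1}\times G_0$, and declares that the set relation ``follows from the definitions of $U_{ss}$ and $V_{ss}^{W}$'' --- so it never resolves the direction of the map either. Your more careful derivation is correct: at steady state $x_{ss}=(I-\bar{A})^{-1}\bar{B}v$, $u=\Phi x_{ss}+\Gamma v$, hence $y_{ss}=W_0v=G_0u$, which gives $v=W_0^{-1}G_0\,u$ and therefore $V_{ss}^{W}=W_0^{-1}G_0\times U_{ss}$; equivalently, $V_{ss}^{W}=\{v:W_0v\in\mathbb{Y}_{ss}\}=W_0^{-1}\mathbb{Y}_{ss}$, whereas the set $W_0G_0^{-1}\times U_{ss}$ equals $W_0G_0^{-2}\mathbb{Y}_{ss}$, and these coincide only under special commutation conditions on $W_0$ and $G_0$ (e.g.\ $W_0^2=G_0^2$), not in general. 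So the discrepancy you flag is not a bookkeeping confusion on your part: consistency with Theorem~\ref{thm: steady state}, where $F_0^{-1}=W_0^{-1}G_0$, requires the matrix in \eqref{eq: thm 5} to read $\bigl(C(I-\bar{A})^{-1}\bar{B}\bigr)^{-1}C(I-A)^{-1}B$; the printed statement has the inverse on the wrong factor. Your proposed resolution --- derive the $u$--$v$ steady-state map from the block diagram and state the matrix accordingly --- is exactly the right fix, and the remaining steps (diagonality of $W_0$ collapsing the Cartesian product to $W_0v\in\mathbb{Y}_{ss}$, invertibility of $G_0$ and $W_0$) are routine as you say.
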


\begin{proof}
Given the state-space realization $(A,B,C,0)$ for $G(z)$,  the DC-gain of $G$ from $u$ to $y$ is given by  $G_0=C(I-A)^{-1}B$. Similarly, the DC-gain of $W$ from $v$ to $y$ is given by $W_0 = C(I-(A+B \Phi))^{-1}B\Gamma$. Therefore, the relationship between $W_0$ and $G_0$ is as follows: 
 \[W_{0}=C(I-(A+B \Phi))^{-1}B \Gamma (C(I-A)^{-1}B)^{-1} \times G_0\]
The proof follows from the definitions of $U_{ss}$ and $V_{ss}^{W}$.
\end{proof}

This theorem shows that if $r$ is not admissible with respect to system $G$ (i.e., $r\notin U_{ss}$), then, after feeding through $\Gamma^{-1}$, $r'$ must also not be admissible with respect to the system $W$ (i.e., $r'\notin V_{ss}^{W}$). 

Before, we mentioned one requirement for DRG, which was $u$ and $r$ should be as close as possible. From Figure~\ref{fig: system block ss}, we see that $v$ and $r'$ are as close as possible, but $u$ and $r$ may not be close. Below, we provide a theorem to quantify the closeness of $u$ and $r$ in steady state.

\bigskip
\begin{theorem}\label{thm: steady state ss2}
For the system of Figure~\ref{fig: system block ss}, the following relation holds at steady state:
$$
\|\Gamma^{-1}\|^{-1} \|v-r'\| \le \|u-r\| \le \|\Gamma\| \|v-r'\|
$$
where $\|.\|$ refers to any vector norm and its associated induced matrix norm.
\end{theorem}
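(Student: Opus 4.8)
The plan is to follow the template of the proof of Theorem~\ref{thm: bounded input}: show that $u-r$ and $v-r'$ are related by a single static linear map, namely multiplication by $\Gamma$, and then read off both inequalities from submultiplicativity of the induced matrix norm.

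First I would extract from Figure~\ref{fig: system block ss} the two algebraic identities that define the DRG-ss architecture. On the plant side, the decoupling control law is $u=\Phi x+\Gamma v$ (this is precisely the substitution used in \eqref{eq: Astar and Bstar}). On the reference side, the pre-compensation block computes $r'=\Gamma^{-1}(r-\Phi x)$, equivalently $\Gamma r'=r-\Phi x$. The state $x$ entering both expressions is the same measured/estimated state, and $\Gamma=B^{*-1}$ is invertible by Assumption~A.\ref{A: B*}.

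Next I would eliminate $x$: subtracting the two identities gives $u-r=(\Phi x+\Gamma v)-(\Gamma r'+\Phi x)=\Gamma(v-r')$. This is the crux of the argument — and since both identities are static, it in fact holds at every time instant, not merely at steady state. From here the upper bound is immediate, $\|u-r\|=\|\Gamma(v-r')\|\le\|\Gamma\|\,\|v-r'\|$. For the lower bound, apply $\Gamma^{-1}$ to the identity to get $v-r'=\Gamma^{-1}(u-r)$, hence $\|v-r'\|\le\|\Gamma^{-1}\|\,\|u-r\|$, which rearranges to $\|u-r\|\ge\|\Gamma^{-1}\|^{-1}\|v-r'\|$. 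Combining the two yields the claim.

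I do not anticipate a genuine obstacle; once the block-diagram relations are written down the proof is a two-line computation. The only point deserving care is confirming that the $x$ appearing in the feedforward term $\Gamma^{-1}(r-\Phi x)$ is literally the same signal as the $x$ in the inner loop $u=\Phi x+\Gamma v$ — which it is by construction — so that the $\Phi x$ terms cancel exactly; this exactness is what makes the bound hold. One could also remark, in analogy with the DRG-tf discussion following Theorem~\ref{thm: bounded input}, that with the $2$-norm this specializes to $\underline{\sigma}(\Gamma)\,\|v-r'\|_2\le\|u-r\|_2\le\Bar{\sigma}(\Gamma)\,\|v-r'\|_2$, tying the performance gap directly to the conditioning of $\Gamma$.
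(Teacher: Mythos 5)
Your proof is correct and follows essentially the same route as the paper's: both derive $u=\Gamma v+\Phi x$ and $r=\Gamma r'+\Phi x$ from the block diagram, cancel the $\Phi x$ terms to get $u-r=\Gamma(v-r')$, and obtain the two inequalities from the induced norms of $\Gamma$ and $\Gamma^{-1}$. Your added observations (that the identity holds pointwise in time, and the $2$-norm specialization to singular values of $\Gamma$) are consistent with the paper's surrounding discussion.
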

\begin{proof}
At steady state, we have that $u=\Gamma v + \Phi x$ and $r=\Gamma r' + \Phi x$. Therefore: $\|u-r\|= \|\Gamma v-\Gamma  r'\|=\|\Gamma (v-r')\|\leq \|\Gamma\| \|v-r'\|.$ This proves the right hand inequality. To show the left hand inequality, write $\|v-r'\|= \|\Gamma^{-1} u-\Gamma^{-1} r\|=\|\Gamma^{-1} (u-r)\|\leq \|\Gamma^{-1}\| \|u-r\|.$ This can be re-written as $\|\Gamma^{-1}\|^{-1} \|v-r'\| \le \|u-r\|$, which concludes the proof.
\end{proof}
 This theorem shows that  $\|u-r\|$ is  bounded above and below by $\|v-r'\|$ scaled by $\|\Gamma\|$ and $\|\Gamma^{-1}\|^{-1}$, which are known {\it a-priori}. More specifically, if $\|\Gamma\|$ is small, then small $\|v-r'\|$ implies small $\|u-r\|$, which is desirable. Also, if $\|\Gamma^{-1}\|^{-1}$ is large, then small $\|v-r'\|$ implies large $\|u-r\|$, which is undesirable. In the case of large $\|\Gamma\|$ or small $\|\Gamma^{-1}\|^{-1}$, no definite conclusion can be made. Note that the steady-state analysis of $v$ is similar to that in DRG-tf (see Theorem \ref{thm: thm4}), except that instead of having $r'=F_0^{-1}r$ in DRG-tf, we have $r'=\Gamma^{-1}(r-\Phi x)$ in DRG-ss. For the sake of brevity, we will not provide the detailed analysis in this section.

\begin{remark}
Similar to DRG-tf, DRG-ss may compute $u_i$ to be larger or smaller than $r_i$ depending on the matrix $\Gamma^{-1}$. Note that $u_i > r_i$ may or may not be desirable, as we discussed in Section \ref{sec: drg-tf}. 
\end{remark}

\subsubsection{Transient Analysis} 

Recall from Figure~\ref{fig: system block ss} that the following relationship holds:
\begin{equation}\label{eq: relationship ss}
r' = \Gamma^{-1}(r-\Phi x), \;\;\;
u = \Gamma v+\Phi x 
\end{equation}
From these equations, the following theorem emerges, which discusses the transient performance of DRG-ss:

\begin{theorem}\label{thm: transient ss}
For the system in Figure~\ref{fig: system block ss}, the following inequalities hold:
\begin{equation}\label{eq: drg-ss transient1}
\|u-r\|_{L_2} \leq  \sqrt{ \sum_{i,j} \Gamma_{ij}^2 }   \times \|v-r'\|_{L_2} \end{equation}
\begin{equation}\label{eq: drg-ss transient2}
\|u-r\|_{L_\infty} \leq m \times \max_{i,j}  |\Gamma_{ij}| \times \|v-r'\|_{L_\infty} 
\end{equation}
where $\Gamma_{ij}$ is the $ij$-th element of $\Gamma$.
\end{theorem}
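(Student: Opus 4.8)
The plan is to exploit the cancellation of the feedback term $\Phi x$ that appears in both relations of \eqref{eq: relationship ss}. From $u = \Gamma v + \Phi x$ and $r = \Gamma r' + \Phi x$ (the latter obtained by rearranging $r' = \Gamma^{-1}(r - \Phi x)$), subtraction gives the pointwise-in-time identity
\[
u(t) - r(t) = \Gamma\bigl(v(t) - r'(t)\bigr), \qquad \forall t .
\]
This is the heart of the argument: unlike the DRG-tf case (Theorems \ref{thm: transient tf} and \ref{thm: transient-ss2}), no delays need to be carried through here because $\Gamma$ is a constant, hence proper, matrix, so the map from $v - r'$ to $u - r$ is the static gain $\Gamma$ with no residual dynamics.

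For \eqref{eq: drg-ss transient1} I would apply, at each time $t$, the elementary inequality $\|\Gamma z\|_2 \le \|\Gamma\|_F \,\|z\|_2$ with $z = v(t) - r'(t)$, where $\|\Gamma\|_F = \sqrt{\sum_{i,j}\Gamma_{ij}^2}$ (this inequality is just Cauchy--Schwarz applied to each row of $\Gamma$). Squaring and summing over $t$,
\[
\|u-r\|_{L_2}^2 = \sum_{t}\bigl\|\Gamma(v(t)-r'(t))\bigr\|_2^2 \le \|\Gamma\|_F^2 \sum_{t}\|v(t)-r'(t)\|_2^2 = \|\Gamma\|_F^2\,\|v-r'\|_{L_2}^2 ,
\]
and taking square roots gives the claim.

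For \eqref{eq: drg-ss transient2} I would start from the same identity and bound componentwise: $|u_i(t) - r_i(t)| = \bigl|\sum_j \Gamma_{ij}(v_j(t) - r'_j(t))\bigr| \le \sum_j |\Gamma_{ij}|\,|v_j(t) - r'_j(t)| \le \bigl(\sum_j |\Gamma_{ij}|\bigr)\,\|v - r'\|_{L_\infty}$, using the definition $\|v - r'\|_{L_\infty} = \sup_t \max_j |v_j(t) - r'_j(t)|$. Bounding the absolute row sum by $\sum_j |\Gamma_{ij}| \le m\,\max_{i,j}|\Gamma_{ij}|$, then taking the maximum over $i$ and the supremum over $t$, yields \eqref{eq: drg-ss transient2}. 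There is essentially no hard step; the only point needing a moment's care is checking that the same state $x(t)$ enters the expressions for $u(t)$ and $r'(t)$ so that $\Phi x$ truly cancels --- after that, both bounds are routine norm estimates (Frobenius-norm domination for the $L_2$ case, absolute-row-sum for the $L_\infty$ case).
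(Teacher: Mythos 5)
Your proof is correct and follows essentially the same route as the paper's: both start from the pointwise identity $u-r=\Gamma(v-r')$ obtained by cancelling $\Phi x$, then apply a row-wise Cauchy--Schwarz (Frobenius-norm) bound for the $L_2$ case and an absolute-row-sum bound with $\sum_j|\Gamma_{ij}|\le m\max_{i,j}|\Gamma_{ij}|$ for the $L_\infty$ case. If anything, your write-up is the cleaner of the two, since the paper's intermediate displays drop some squares and absolute values that your version keeps explicit.
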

\begin{proof}
From \eqref{eq: relationship ss}, the following equation holds: $u-r = \Gamma (v-r')$. Then,
\[
\|u - r\|_{L_2}^2 = \|\Gamma (v-r')\|^2_{L_2}
= \sum_{t=0}^{\infty}\sum_{i=1}^{m} (\Gamma_i (v(t)-r'(t)))^2 
\]
where $\Gamma_i$ refers to the $i$-th row of $\Gamma$. By Cauchy-Schwarz inequality, we have:
\[
\begin{aligned}
&\;\;\;\;\;\;\;\; \sum_{t=0}^{\infty}\sum_{i=1}^{m} (\Gamma_{i}(v(t)-r'(t)))^2
\leq \sum_{t=0}^{\infty}\sum_{i=1}^{m} \| \Gamma_{i}\| \|v(t)-r'(t)\| \\
& \leq \sum_{i=1}^{m} \| \Gamma_{i}\| \sum_{t=0}^{\infty} \|v(t)-r'(t)\| 
 =  \sum_{i,j} \Gamma_{ij}^2 \| (v-r') \|^2_{L_2}
\end{aligned}
\]
Taking the square root of both sides proves \eqref{eq: drg-ss transient1}. Next, we will show the proof of \eqref{eq: drg-ss transient2}. We have that:
\[
\begin{aligned}
&\;\;\;\;\;\;\;\; \|u - r\|_{L_\infty} = \|\Gamma (v-r')\|_{L_\infty} = \sup_{t \geq 0}(\max_{i}|\Gamma_i (v-r')|)\\
 &\leq \sup_{t \geq 0} (\max_{i,j} |m \Gamma_{ij}|) (\max_{i}| (v-r')|)  = \max_{i,j} |m \Gamma_{ij}| \| (v-r') \|_{L_\infty}
\end{aligned}
\]
Then, \eqref{eq: drg-ss transient2} follows.
\end{proof}
The theorem presents the relationship between $v-r'$ and $u-r$ and shows that if the elements of $\Gamma$ are small, then the distance between $u$ and $r$ would also be small. This implies that tracking will not be significantly deteriorated as compared with VRG.

\section{Computational Considerations}\label{sec: computation time}
In this section, we discuss the computational aspects of DRG and compare the run-time of DRG with VRG. Since the filters $F(z)$ and $F^{-1}(z)$ in DRG-tf and the matrix multiplications in DRG-ss can be implement easily, we will not focus on them. The focus of this section will instead be on the implementation of the SRGs that are used in the DRG formulation. Note that the SRGs in DRG-tf and DRG-ss are the same, so we will only consider DRG-tf in this section. 

Recall that the implementation of the DRG on an $m$-input $m$-output system involves solving $m$ linear programs (LP), described by \eqref{eq:kappa for DRG}.  These LPs can be solved implicitly via LP solvers, or explicitly as explained below.  VRG, on the other hand, requires the solution to a Quadratic Program (QP), which can be solved  implicitly via online optimization or explicitly via multi-parametric programming. In this work, we use the MPT Toolbox  in Matlab to implement implicit QP and implicit LP (MPT was the fastest among other solvers such as Gurobi). Also, we use the algorithm that is introduced in \cite{tondel2003algorithm} to implement explicit QP.

For the explicit DRG mentioned above, we implement Algorithm \ref{Algorithm:algorithm1}, which provides an algorithm to compute $\kappa_i$ in \eqref{eq:kappa for DRG} for each SRG. For this algorithm, we have assumed that $O_{\infty}^{W_{ii}}$ is given by polytopes of the form \eqref{eq:OinfForm}, and that $j^{*}$ denotes the number of rows of $H_x, H_v, h$. Note that we have used the notation $H_v$ instead of $H_u$ because the output of the SRGs in DRG are $v_i$ and not $u_i$. In this algorithm, with some abuse of notation, we use $x$ to refer to the state that is fed back to the $i$-th SRG (i.e., either the state of the $i$-th subsystem or the state of the entire system as explain in Section \ref{sec:DRG-tf observer}).  

\begin{algorithm}
 \caption{Custom Explicit DRG Algorithm}
 \begin{algorithmic}[1]
 \STATE let $a=H_{v}(r_i'(t)-v_i(t-1))$
 \STATE let $b=h-H_{x}x(t)-H_{v}v_i(t-1)$
 \STATE set $\kappa=1$
  \FOR {$i = 1$ to $j^*$}
  \IF {$a(i) > 0$}
  \STATE $\kappa=\min(\kappa,b(i)/a(i))$
  \ENDIF
  \ENDFOR
  \STATE $\kappa_i=\max(\kappa,0)$
 \end{algorithmic} 
 \label{Algorithm:algorithm1}
 \end{algorithm}

To compare the performance of DRG with VRG, we use an example of a distillation process, which is a two-input and two-output coupled system presented in \cite{Skogestad_2007}. The DRG formulation for this system requires the solution to two LPs, whereas the VRG formulation requires the solution to a single QP. All simulations were performed in Matlab R2017b. The simulation device  is a Macbook with 1.1 GHz Intel Core m3 processor and 8 GB memory.

We simulate the distillation process using 4 different governor/solver combinations: explicit DRG (i.e., Algorithm 1), implicit DRG (i.e., implicit LP), explicit VRG (i.e., explicit QP), and implicit VRG (i.e., implicit QP). The simulation length is 10000 time steps in all cases with a sample time of 0.01s. Upon simulating the system, we compute the average and maximum computation times of the solvers. In order to eliminate the effects of background processes running on the computer, each of the above experiments are run 5 times and the averages are computed. 
The results are shown in Table \ref{table: computation time for VRG} and Table \ref{table: computation time for DRG}.  As can be seen, the average time indicates that the Explicit RG is two orders of magnitude faster than explicit VRG and explicit VRG runs three orders of magnitude faster than the rest of the governors, which means that DRG computation terminates  faster than VRG.

\begin{table}
\renewcommand{\arraystretch}{1.3}
\caption{Computation time for VRG in the practical example.}
\centering
\begin{tabular}{||c||c||c||}
\hline
& \bfseries Explicit QP & \bfseries Implicit QP\\
\hline\hline
average & $8.71\times 10^{-5}$s & $0.45 \times 10^{-2}$s\\
\hline
\hline
maximum & $7.66 \times 10^{-4}$s & $2.3 \times 10^{-2}$s\\
\hline
\end{tabular}
\label{table: computation time for VRG}
\end{table}

\begin{table}
\renewcommand{\arraystretch}{1.3}
\caption{Computation time for DRG in the practical example}
\label{table: computation time for DRG}
\centering
\begin{tabular}{||c||c||c||}
\hline
& \bfseries Implicit LP & \bfseries Algorithm 1\\
\hline\hline
average & $0.49 \times 10^{-2}$s & $5.2 \times 10^{-7}$s\\
\hline
\hline
maximum & $2.2 \times 10^{-2}$s & $1.42 \times 10^{-5}$s\\
\hline
\end{tabular}
\end{table}

\section{Robust DRG}\label{Sec: Robust DRG}
In section \ref{sec:MAS with disturbance}, we provided a brief explanation of how SRG can be modified to handle systems  affected by unknown disturbances and sensor noise. Essentially, MAS is ``robustified" (i.e., shrunk) to account for the worst-case realization of the disturbances. In this section, we extend these ideas to DRG-tf and DRG-ss, where we show that an initial pre-processing is required to have the system in the form \eqref{eq:system with disturbance}. Secondly, we consider the case where the system model is uncertain, where we present an innovative solution for handling these systems.

\subsection{DRG for Systems with Unknown Disturbances}\label{sec:DRG for disturbance}

\subsubsection{DRG-tf for systems with unknown disturbances}
Suppose system \eqref{eq:Gz} is now affected by an unknown disturbance $d(t)\in \mathbb{R}^d$:
\begin{equation}\label{eq:Gz with disturbance}
Y(z)=G(z)U(z)+G_w(z)D(z)
 \end{equation}
where $D(z)$ is the $\mathcal{Z}$-transform of $d(t)$. Consistent with the literature of SRG, it is  assumed that $d \in \mathbb{D}$, where $\mathbb{D}$ is a compact polytopic set.

In this section, we consider DRG-tf with the diagonal decoupling method explained in Section \ref{sec: drg-tf} (the identity decoupling method can be applied similarly). Under  Assumption  A.\ref{A: NMP}, we compute the filter $F(z)$ defined in \eqref{eq:diagonal method}. This leads to each $y_i$ described by: $Y_i(z)=G_{ii}(z)V_i(z)+\sum_{j=1}^{d} G_{w_{ij}}(z) D_j(z)$, which is decoupled from $v$ to $y$, but not from $d$ to $y$. 
To address this, we convert the dynamics of each $y_i$ to state-space form:
\begin{equation}\label{eq:model state space for DRG-tf}
\begin{aligned}
&x_i(t+1)=A_ix_i(t)+B_i v_i(t)+B_{w_i}d(t)\\
&y_i(t)=C_ix_i(t)+D_{w_i}d(t) \in \mathbb{Y}_i
\end{aligned}
\end{equation}
For each subsystem \eqref{eq:model state space for DRG-tf} we now proceed to compute the corresponding robust MAS using the procedure described in Section \ref{sec:MAS with disturbance}. The implementation of DRG-tf is otherwise unchanged.

\subsubsection{DRG-ss for systems with unknown disturbances}\label{sec: DRG-ss with dist}

\begin{figure}
\centerline{\includegraphics[scale=0.3]{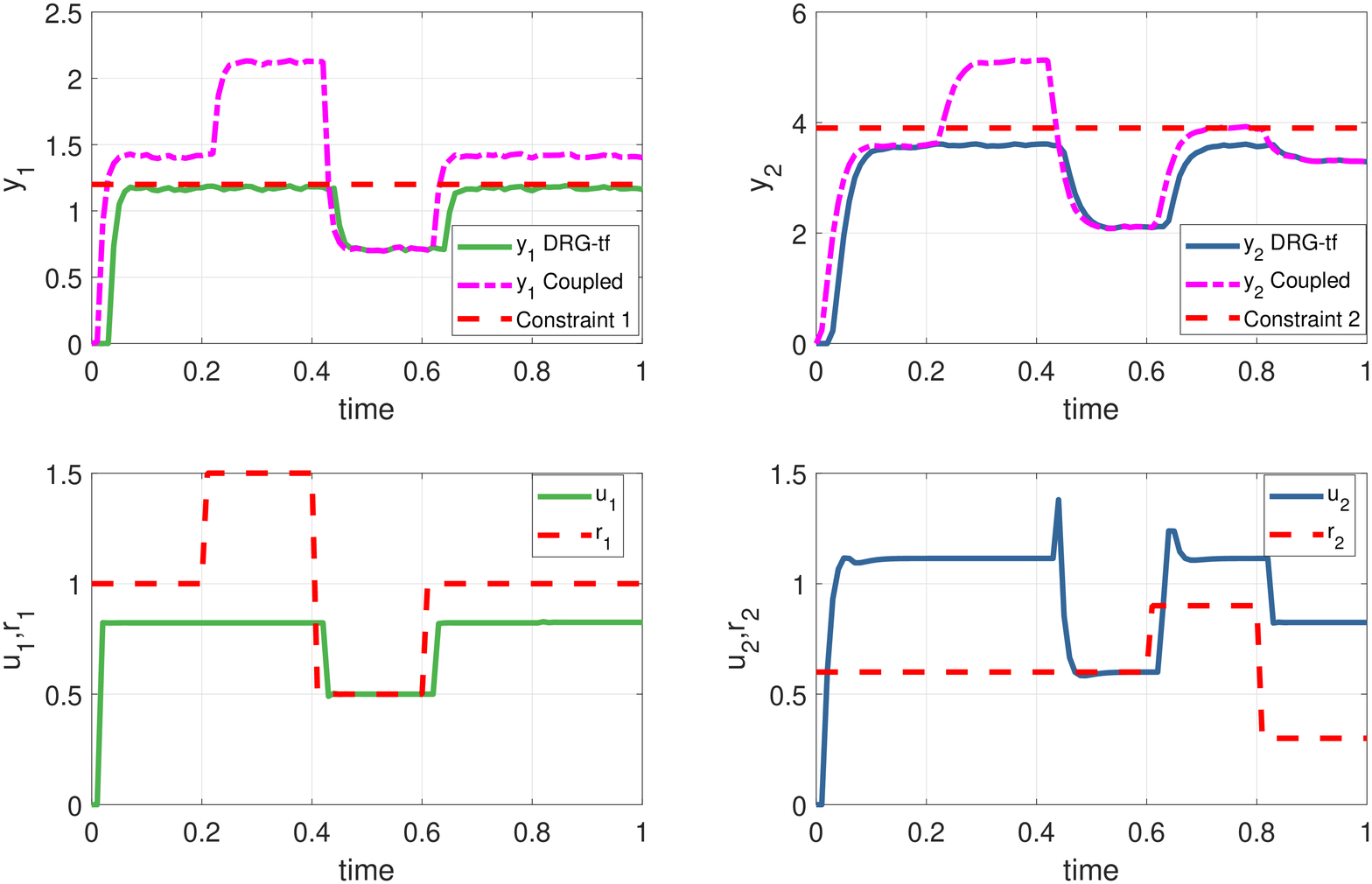}}
    \caption{DRG-tf with disturbance.}
\label{fig:DRG-tf example}
\end{figure}

In order to decouple system \eqref{eq:system with disturbance} from the inputs $u$  to the outputs $y$, we apply the pole assignment decoupling method  explained in Section \ref{section:drg_ss}; similar results can be obtained for the identity decoupling method. The decoupled system to consider is:
\begin{equation}\label{eq:system with disturbance and decoupling}
\begin{aligned}
&x(t+1)=(A+B\Phi)x(t)+B\Gamma v(t)+B_{w}d(t)\\
&y(t)=Cx(t)+D_{w}d(t) \in \mathbb{Y}
\end{aligned}
\end{equation}
where $\Phi$ and $\Gamma$ are computed based on \eqref{eq:pair2}, and $v$ is the input obtained from the SRGs (see Figure~\ref{fig: system block ss}). The $i$-th decoupled subsystem can then be written as: 
\begin{equation}\label{eq:individual systems with disturbance and decoupling}
\begin{aligned}
&x(t+1)=\bar{A}x(t)+B_iv_i(t)+B_{w}d(t)\\
&y_i(t)=C_ix(t)+D_{w_i}d(t) \in \mathbb{Y}_i
\end{aligned}
\end{equation}
where $\bar{A}=A+B\Phi$, $B_i$ is the $i^{th}$ column of $B\Gamma$, $C_i$ is the $i^{th}$ row of $C$, and $D_{w_i}$ is the $i^{th}$ row of $D_w$. Based on \eqref{eq:individual systems with disturbance and decoupling} we  create the corresponding robust MAS for the $i$-th subsystem.  The DRG-ss implementation is otherwise unchanged.

Next, we will illustrate the above ideas with two examples, one for DRG-tf and another for DRG-ss. Both examples are necessary in order to highlight the subtleties of the two approaches.

\begin{figure}
\centerline{\includegraphics[scale=0.3]{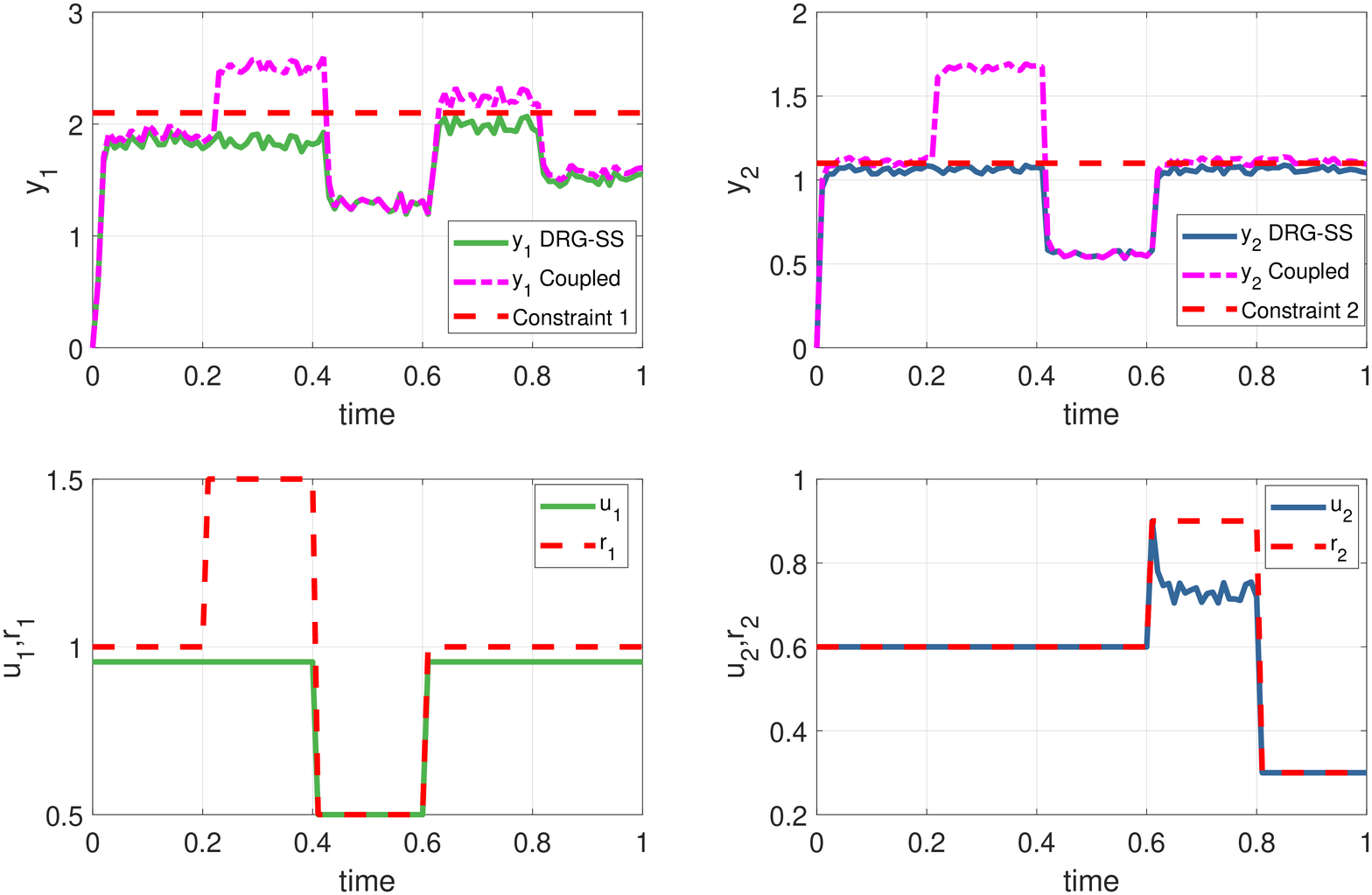}}
    \caption{DRG-ss with disturbance.}
\label{fig:DRG-ss example}
\end{figure}

For DRG-tf, we consider the system \eqref{eq:Gz with disturbance} with $q=0.05$ and $G_w(z)=\begin{bmatrix}
      \frac{0.2}{(z-0.5)^2(3z+1)}\\[0.3em]
      \frac{0.3}{(2z+1)(z-0.7)^2}\\[0.3em]
\end{bmatrix}$, and the constraints: 
$-1.2 \leq y_1\leq 1.2, -3.9 \leq y_2\leq 3.9$. 
We implement DRG-tf for this system assuming the  disturbance  satisfies $d\in \mathbb{D}:=[-0.1, 0.1]$. For DRG-ss, we consider again system \eqref{eq: example drg-ss} used in Section \ref{section:drg_ss} with the   output constraints:
$y_1\leq 2.1,    y_2\leq 1.1$. Assume $D_w$  is zero,  $B_w=[1.3,0.3,2.51]^\top$, and that the disturbance also satisfies $d(t)\in \mathbb{D}:=[-0.1, 0.1]$. We decouple the system using the pole assigment method, placing the closed-loop poles at 0.1. For the purpose of simulations, the disturbance in both cases is generated randomly and uniformly from the interval $[-0.1, 0.1]$.

The results of DRG with disturbance are shown in Figures \ref{fig:DRG-tf example} and \ref{fig:DRG-ss example}. In the top subplots of these figures, ``$y_1$ coupled" and ``$y_2$ Coupled" refer to the response of the system without DRG (i.e., $r$ applied to $G$ directly), which shows that, without a DRG, the  constraints are violated. These results confirm that DRG is able to satisfy the constraints in the presence of disturbances. As can be seen from the plots, the disturbance affects both outputs (the outputs appear noisy). Interestingly, the disturbance does not affect $u$ for DRG-tf (see Figure~\ref{fig:DRG-tf example}), but it affects $u$ for DRG-ss (see $u$ in Figure~\ref{fig:DRG-ss example}). The reason for this behavior can be explained as follows: it can be seen from Figure~\ref{fig: system block ss} that the outer feedback in  DRG-ss may transmit the effects of disturbances and sensor noise  to $r'$. As a result of this, the effect of the disturbance on the output may be higher in DRG-ss than in DRG-tf. This may be a decisive argument to select between  DRG-ss and DRG-tf, since the latter does not show this type of behavior. 

\begin{remark}
For a system in which the states are not measured, a standard observer may not provide accurate estimation of the state if unknown disturbances affect the system. In such a case, we refer to the work developed in \cite{kalabic_2015}, where an observer which considers the error introduced by unknown disturbances is implemented. 
\end{remark}

\subsection{DRG with parametric uncertainty}
In this section, we briefly sketch the approach that can be used  for cases when  system $G(z)$ in Figures \ref{fig:Decoupled with RG} and  \ref{fig: system block ss} has parametric uncertainty, that is, matrices $A$ and $B$ are uncertain or vary in time. For simplicity, we assume matrix $C$ is known and $D=0$. The approach we take is similar to  \cite{kerrigan2001robust}. Note that we consider parametric uncertainties in the state-space matrices, because the RG approach is a time-domain approach. Therefore, frequency domain uncertainties  are not investigated. We assume that the uncertain/time-varying closed-loop system (i.e., $G(z)$) is asymptotically stable. Therefore, stability is still not a concern in DRG-tf, but additional analysis must be carried out to ensure stability of DRG-ss. This is similar to our prior discussion in Section \ref{section:drg_ss} so we will not dwell on the issue of stability.

For this discussion, reconsider system $G(z)$, but now with parametric uncertainty on the $A$ and $B$ matrices, which leads to the square linear system given by:
\begin{equation}\label{eq:system with parametric uncertainty}
\begin{aligned}
&x(t+1)=A(t)x(t)+B(t)u(t)\\
&y(t)=Cx(t) \in \mathbb{Y}
\end{aligned}
\end{equation}
In \cite{kerrigan2001robust}, in order to compute the robust MAS for this type of systems, it is assumed that the pair $(A(t),B(t))$ belongs to a given uncertainty polytope defined by the convex hull of the matrices $(A^{(j)},B^{(j)})$, that is 
$$
(A(t),B(t)) \in \mathrm{conv} \{(A^{(1)},B^{(1)}), \ldots, (A^{(N)}, B^{(N)}) \},
$$  
where $N$ is the number of  vertices in the uncertainty polytope (\cite{Pluymers_2005}). Applying this idea directly to DRG, however, may not guarantee constraint satisfaction because the parametric uncertainties  will prevent us from perfectly decoupling the system. To explain, suppose we select a nominal pair of $A$ and $B$ matrices from the convex hall, and decouple this nominal system by computing the matrices $\Phi$ and $\Gamma$ using \eqref{eq:pair1} or \eqref{eq:pair2}. Since the matrices of the actual system will be different from the nominal ones, this decoupling process results in:
\begin{equation}\label{eq:decoupled with parametric}
\begin{aligned}
x(t+1)=\bar{A}(t)x(t)+\bar{B}(t)v(t),\;\;\;\; y(t)=Cx(t)
\end{aligned}
\end{equation}
where the pair $(\bar{A}(t),\bar{B}(t))$  satisfies:
 \begin{equation}\label{eq:conv uncertainty pair}
(\bar{A}(t),\bar{B}(t)) \in \mathrm{conv} \{(\bar{A}^{(1)},\bar{B}^{(1)}), \ldots, (\bar{A}^{(N)}, \bar{B}^{(N)}) \},
 \end{equation} 
where $\bar{A}^{(j)}=A^{(j)}+B^{(j)}\Phi$, $\bar{B}^{(j)} =B^{(j)} \Gamma$. Clearly, these dynamics are not decoupled for all matrices in the uncertainty polytope. This implies that DRG implemented on \eqref{eq:decoupled with parametric} may not achieve perfect decoupling and thus may not enforce the constraints.

To address the above problem, we introduce a novel margin in each $O_{\infty}^{W_{ii}}$ to robustify each channel against these coupling dynamics. To explain, consider the dynamics of the $i$-th output of \eqref{eq:decoupled with parametric}:
\begin{equation}\label{eq:decouple unt using dist}
\begin{aligned}
&x(t+1)=\bar{A}(t)x(t)+\bar{B_i}(t)v_i(t)+B_w(t)\bar{v}(t)\\
&y_i(t)=C_ix(t)
\end{aligned}
\end{equation}
where $C_i$ is the $i$-th row of $C$,  $\bar{B}_i(t)$ corresponds to the $i^{th}$ column of $\bar{B}(t)$,  $B_w(t)$ gathers all columns of $\bar{B}(t)$ except the $i^{th}$ one, and $\bar{v}(t)$ represents the vector containing all inputs except the $i$-th one, i.e., vector of all $v_k$'s, $k \neq i$. Our solution below treats $\bar{v}$ as an unknown bounded disturbance. To accomplish this, we quantify a lower and an upper bound on $\bar{v}$ and robustify $O_\infty^{W_{ii}}$ using results similar to Section \ref{sec:DRG for disturbance}. Specifically, to find the bounds, we leverage the fact that each element of $\bar{v}(t)$, $\bar{v}_k$,  is the output of an SRG, whose goal is to enforce the constraints on the $k$-th output (i.e., $y_k(t)\in \mathbb{Y}_k$). Thus,  we can define upper and lower bounds on each element of $\bar{v}$ using the steady-state constraints \eqref{eq: steady state O_ii}:
\begin{equation}
\begin{aligned}
 &   \bar{v}_k^{\max}=\max\{\bar{v}_{k}:W_{kk_0}^{(j)}\bar{v}_{k}\in (1-\epsilon)\mathbb{Y}_k, j = 1,\ldots,N\}\\
 &   \bar{v}_k^{\min}=\min\{\bar{v}_{k}:W_{kk_0}^{(j)}\bar{v}_{k}\in (1-\epsilon)\mathbb{Y}_k, j = 1,\ldots,N\}
\end{aligned}
\end{equation}
 where  $W_{kk_0}^{(j)}$ represents the DC gain of the system from the $k$-th input to the $k$-th output given the pair $(\bar{A}^{(j)},\bar{B}^{(j)})$. Since we have that each ${\bar{v}_k(t)\in [\bar{v}_k^{\min}, \bar{v}_k^{\max}]}$, we can now treat $\bar{v}(t)$ in \eqref{eq:decouple unt using dist} as an unknown bounded disturbance to create a robust MAS set for the $i$-th channel, which can be accomplished using the ideas from Section \ref{sec:DRG for disturbance} (for unknown disturbances) and references \cite{kerrigan2001robust,Pluymers_2005} (for polytopic uncertainties). Implementation of DRG using these MAS's will ensure that the system is robust to the plant/model mismatch and, thus, the constraints will be satisfied. It is important to mention that this approach may  lead to conservative results depending on  how much the MAS is shrunk. However, if the system is ``almost" decoupled (i.e., the nominal system is close to the actual one), then the shrinkage will be negligible. 
For the sake of brevity, numerical examples and further analysis on this topic will appear in our future work.

\section{Extension of DRG to non-square MIMO systems}\label{Sec: nonsquare system}

In this section, we will briefly introduce the extension of DRG to non-square MIMO systems, i.e., systems where the number of inputs is either larger or smaller than the number of outputs. We will treat these cases separately in the following subsections. Generally speaking, we achieve this by either introducing fictitious outputs to transform the system into a square one (see Figure~\ref{fig:DRG-tf nonsquarelarger}), or only decoupling a square subsystem of it (see Figure \ref{fig: DRG-tf non-square large y}). For the sake of clarity, we will only focus on the extension of DRG-tf with the diagonal method; the same process can be applied to DRG-tf with identity method and DRG-ss.

\subsection{Systems with larger number of inputs}

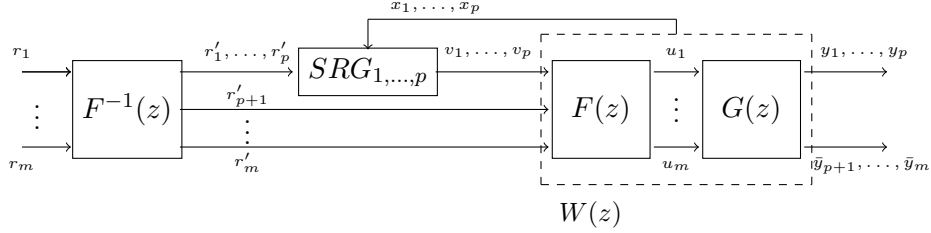
\begin{figure}
\centering
\begin{tikzpicture}
    [L1Node/.style={rectangle,draw=black,minimum size=5mm},
    L2Node/.style={rectangle,draw=black, minimum size=13mm},
    L3Node/.style={rectangle,draw=black, minimum size=30mm}]
      \node[L2Node] (n3) at (1.4, 0){$F^{-1}(z)$};
      \node[L1Node] (n4) at (4.6, 0.5){$SRG_{1,\ldots,p}$};
      \draw (3,-0.2)node [color=black,font=\fontsize{10}{10}\selectfont]{\vdots};
      \draw (8.7,0.1)node [color=black,font=\fontsize{10}{10}\selectfont]{\vdots};
      \draw (0.2,0)node [color=black,font=\fontsize{10}{10}\selectfont]{\vdots};
      \node[L2Node] (n5) at (7.7, 0){$F(z)$};
      \node[L2Node] (n6) at (9.7, 0){$G(z)$};
	   \draw[dashed](6.9,1)-- (10.5,1);
      \draw[dashed](6.9,1)-- (6.9,-1);
      \draw[dashed](6.9,-1)-- (10.5,-1);
      \draw[dashed](10.5,-1)-- (10.5,1);
      \draw[->](0,0.5)--(0.65,0.5);
      \draw(8.7,1)--(8.7,1.2);
      \draw(8.7,1.2)--(4.6,1.2);
      \draw[->](4.6,1.2)--(4.6,0.8);
      \draw[->](0,0.5)--(0.65,0.5);
      \draw[->](0,-0.5)--(0.65,-0.5);
      \draw[->](2.1,0.5)--(3.6,0.5);
      \draw[->](2.1,0)--(7,0);
      \draw[->](2.1,-0.5)--(7,-0.5);
      \draw[->](5.5,0.5)--(7,0.5);
      \draw[->](8.4,0.5)--(9,0.5);
      \draw[->](8.4,-0.5)--(9,-0.5);
      \draw[->](10.4,0.5)--(11.5,0.5);
      \draw[->](10.4,-0.5)--(11.5,-0.5);
      \draw (0,0.75)node [color=black,font=\fontsize{6}{6}\selectfont]{$r_{1}$};
      \draw (0,-0.75)node [color=black,font=\fontsize{6}{6}\selectfont]{$r_{m}$};
      \draw (3,0.75)node [color=black,font=\fontsize{6}{6}\selectfont]{$r_{1}',\ldots,r_{p}'$};
      \draw (3,0.18)node [color=black,font=\fontsize{6}{6}\selectfont]{$r_{p+1}'$};
      \draw (3,-0.7)node [color=black,font=\fontsize{6}{6}\selectfont]{$r_{m}'$};
        \draw (6.2,0.75)node [color=black,font=\fontsize{6}{6}\selectfont]{$v_{1},\ldots,v_{p}$};
      \draw (8.7,0.75)node [color=black,font=\fontsize{6}{6}\selectfont]{$u_{1}$};
      \draw (8.7,-0.75)node [color=black,font=\fontsize{6}{6}\selectfont]{$u_{m}$};
      \draw (11.2,0.75)node [color=black,font=\fontsize{6}{6}\selectfont]{$y_{1},\ldots,y_{p}$};
      \draw (11.3,-0.75)node [color=black,font=\fontsize{6}{6}\selectfont]{$\bar{y}_{p+1},\ldots,\bar{y}_m$};
      \draw (7,-1.4)node [color=black,font=\fontsize{10}{10}\selectfont]{\hspace{1cm} $W(z)$};
      \draw (5.5,1.35)node [color=black,font=\fontsize{6}{6}\selectfont]{$x_{1},\ldots,x_{p}$};
\end{tikzpicture}
    \caption{DRG-tf block diagram for non-sqaure systems with larger number of inputs. $\bar{y}_{p+1},\ldots,\bar{y}_m$ represent the outputs that are manually added to system $G(z)$ to transfer it into a square system.}
\label{fig:DRG-tf nonsquarelarger}
\end{figure}

Assume that system $G$ in Figure~\ref{fig:DRG-tf nonsquarelarger} has $m$ inputs and $p$ outputs, with $m > p$:
\begin{equation}\label{eq:G_nonsquare}
\renewcommand*{\arraystretch}{.5}
\begin{bmatrix}
Y_1(z)\\
\vdots\\
Y_p(z) \\
\end{bmatrix}
=\underbrace{\begin{bmatrix}
      G_{11}(z) & \ldots & G_{1m}(z)\\
      \vdots & \ddots & \vdots \\
      G_{p1}(z) & \ldots & G_{pm}(z)
     \end{bmatrix}}_{G} \begin{bmatrix}
U_1(z)\\
\vdots\\
U_p(z) \\
\vdots\\
U_m(z)
\end{bmatrix}
 \end{equation}
We transform $G(z)$ into a square system as follows. We manually introduce $m-p$ outputs, $\bar{Y}_{p+1},\ldots, \bar{Y}_{m}$, leading to the square system $\widetilde{G}$, described below:
\begin{equation}\label{eq:Gbar_nonsquare}
\renewcommand*{\arraystretch}{.5}
\begin{bmatrix}
\begin{array}{c}
Y_1(z)\\
\vdots\\
Y_p(z) \\
 \hdashline \\
\bar{Y}_{p+1}(z) \\
\vdots\\
\bar{Y}_m(z) \\
\end{array}
\end{bmatrix}
=\underbrace{\begin{bmatrix}
\begin{array}{l r}
\begin{matrix}
G_{c_{1,p}} & G_{c_{p+1,m}}
\end{matrix} \\
      \hdashline \\
      \begin{matrix}
      0_{m-p,p} & \bar{G}
      \end{matrix}
     \end{array}
     \end{bmatrix}}_{\widetilde{G}} \begin{bmatrix}
U_1(z)\\
\vdots\\
U_p(z) \\
\vdots\\
U_m(z)
\end{bmatrix}
 \end{equation}
where  $\bar{G}$ is an $(m-p) \times (m-p)$ transfer matrix representing the fictitious outputs, and $G_{c_{1,p}}$ and $G_{c_{p+1,m}}$ denote the first $p$ columns of $G$ and the last $(m-p)$ columns of $G$, respectively.
 
Note that the choice of the fake dynamics (i.e., $
[0_{m-p,p} \quad \bar{G}]
$) in \eqref{eq:Gbar_nonsquare} is not unique. The reason we use this structure of $[0_{m-p,p} \quad \bar{G}]$ is that  $\widetilde{G}^{-1}$ and $F$ can be easily obtained through block matrix inversion  (\cite{lu2002inverses}), and the structure of $F$ is easy to study, as will be explained below.  
For the diagonal method in DRG-tf, the decoupled system $W$ (see Figure~\ref{fig:DRG-tf nonsquarelarger}) is constructed as:
\begin{equation}\label{eq:Wz_nonsquare}
W= \begin{bmatrix}
\begin{array}{c:c}
\underbrace{\begin{matrix}
\renewcommand*{\arraystretch}{.5}
      G_{11}(z) & \ldots & 0\\
      \vdots & \ddots & \vdots \\
      0(z) & \ldots & G_{pp}(z)
      \end{matrix}}_{W_p} & 0_{p,(m-p)}\\
      \hdashline \\
      0_{(m-p),p} & \bar{G}_w
     \end{array}
     \end{bmatrix}
\end{equation}
where $\bar{G}_w$ is a $(m-p)\times(m-p)$ transfer function matrix that is chosen such that it has a stable inverse, so that $F^{-1}$ can be computed (see \eqref{eq:diagonal method}). 
Recall that the true outputs of the system are $Y_1,\ldots,Y_p$ and the constraints are on these outputs. So, as Figure~\ref{fig:DRG-tf nonsquarelarger} shows, only $p$ different SRGs are needed to ensure these outputs satisfy the constraints and there is no need to design SRGs for $\bar{G}_w$. Finally, $F^{-1}$ is introduced to ensure that $u$ is close to $r$, as before. By choosing $\widetilde{G}$ and $W$ as shown in \eqref{eq:Gbar_nonsquare} and \eqref{eq:Wz_nonsquare}, $F$ can be written as:
\begin{equation}\label{eq:Fz_nonsquare}
\renewcommand*{\arraystretch}{1.5}
F= \begin{bmatrix}
      G_{c_{1,p}}^{-1}W_p & -G_{c_{1,p}}^{-1}G_{c_{p+1,m}}\\
      0_{m-p,p} &  \bar{G}^{-1}\bar{G}_w 
     \end{bmatrix}
\end{equation}
Note that if we choose $\bar{G}$ to be equal to $\bar{G}_w$, then, $\bar{G}^{-1}\bar{G}_w$ in \eqref{eq:Fz_nonsquare} will become an identity matrix, which means that  $F$ is unrelated to the choice of $\bar{G}$. Of course, for this to hold, $\bar{G}$ needs to be invertible to ensure that \eqref{eq:Fz_nonsquare} exists.

\begin{remark} 
As can be seen from \eqref{eq:Fz_nonsquare}, if $\bar{G} \neq \bar{G}_w$, then $F$ is related to both $\bar{G}$ and $\bar{G}_w$. This implies that a proper set of $\bar{G}$ and $\bar{G}_w$ can be chosen such that the norm of $F$ is small, which as discussed in Section \ref{sec:analysisDRGtf}, will lead to a small distance between $u$ and $r$ (see Figure~\ref{fig:DRG-tf nonsquarelarger}) and, hence, good tracking performance.

\end{remark}
Since $G(z)$ has been transformed into a square system, the same analysis presented in Section \ref{sec:analysisDRGtf} can be applied to study the steady-state and transient performance of DRG-tf for non-square systems. Hence, we will not repeat this analysis. 

\subsection{Systems with larger number of outputs}

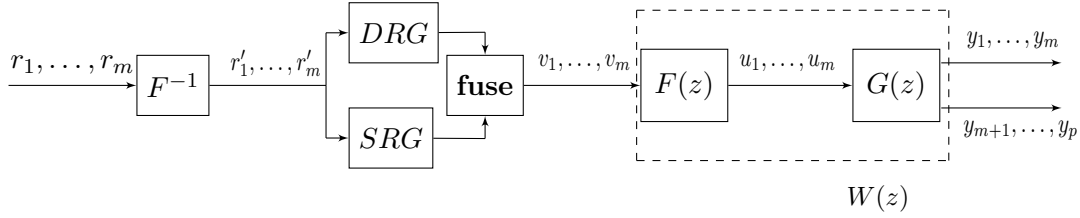
\begin{figure}
\centering
\begin{tikzpicture}
        [node distance=1.5cm,>=latex']
        \node [input, name=input] {};

        \node [block,minimum size=0.8cm, right=1.7cm of input] (Ginv) {$F^{-1}$};
        \node [output,minimum size=0.8cm, right=1.55cm of Ginv] (temp1){} ;
        \node [output,minimum size=0.8cm, above=0.7cm of temp1] (temp2){} ;
        \node [output,minimum size=0.8cm, below=0.7cm of temp1] (temp3){} ;
        \node [block,minimum size=0.8cm, right=0.3 of temp2] (RG) {$DRG$};
        \node [block,minimum size=0.8cm, right=0.3 of temp3] (SRG) {$SRG$};
        \node [block,minimum size=0.8cm, right =1.6 of temp1] (fuse) {$\textbf{fuse}$};
        \node [output,minimum size=0.8cm, above=0.3cm of fuse] (temp4){} ;
        \node [output,minimum size=0.8cm, below=0.3cm of fuse] (temp5){} ;
        \node [block, right=1.55 of fuse] (F) {$F(z)$};
        \node [block, right=1.65 of F] (G) {$G(z)$};
        \node [output, name=state1,below=0.3 of system] {};
        \node [output, name=state2,right=0.8 of state1] {};
        \node [output, below of=sum2] (loop) {};
        \node [output, name=state3,right=0.26 of feedback] {};
        \node [output, name=state4,above=0.6 of RG] {};
        \node [output, name=state5,below=0.31 of RG] {};
        \draw [draw,->] (input) -- node[above=0.2mm]{$r_1,\ldots,r_m$} (Ginv);
        \node [output, right=1 of G] (output) {};

        \draw [-] (Ginv) -- node [above,pos=0.79]{}(temp1);
        \draw [-] (temp1) -- node []{} (temp2);
        \draw [-] (temp1) -- node []{} (temp3);
        \draw [->] (temp2) -- node []{} (RG);
        \draw [->] (temp3) -- node []{} (SRG);
        \draw [-] (RG) -- node []{} (temp4);
        \draw [-] (SRG) -- node []{} (temp5);
        \draw [->] (fuse) -- node []{} (F);
        \draw [->] (temp4) -- node []{} (fuse);
        \draw [->] (temp5) -- node []{} (fuse);
        \draw [->] (F) -- node []{} (G);
 
      \draw (11,-1.5)node
      [color=black,font=\fontsize{10}{10}\selectfont]{\hspace{1cm} $W(z)$};

      \draw (3,0.3)node
      [color=black,font=\fontsize{10}{10}\selectfont]{\hspace{1cm} \scalebox{.8}[1.0]{$r_1',\ldots,r_m'$}};
    \draw (7.1,0.3)node
      [color=black,font=\fontsize{10}{10}\selectfont]{\hspace{1cm} \scalebox{.8}[1.0]{$v_1,\ldots,v_m$}};
      \draw (9.8,0.3)node
      [color=black,font=\fontsize{10}{10}\selectfont]{\hspace{1cm} \scalebox{.8}[1.0]{$u_1,\ldots,u_m$}};
    \draw (12.8,0.6)node
      [color=black,font=\fontsize{10}{10}\selectfont]{\hspace{1cm} \scalebox{.8}[1.0]{$y_1,\ldots,y_m$}};
    \draw (12.9,-0.6)node
      [color=black,font=\fontsize{10}{10}\selectfont]{\hspace{1cm} \scalebox{.8}[1.0]{$y_{m+1},\ldots,y_p$}};       

	   \draw[dashed](8.35,1)-- (12.5,1);
      \draw[dashed](8.35,1)-- (8.35,-1);
      \draw[dashed](8.35,-1)-- (12.5,-1);
      \draw[dashed](12.5,-1)-- (12.5,1);
      \draw[->](12.4,0.3)--(14,0.3);
    \draw[->](12.4,-0.3)--(14,-0.3);

\end{tikzpicture}
\caption{DRG-tf block diagram for non-sqaure systems with larger number of inputs.}
\label{fig: DRG-tf non-square large y}
\end{figure}

Assume system $G(z)$ in Figure~\ref{fig: DRG-tf non-square large y} has $m$ inputs and $p$ outputs, with $p > m$. Instead of decoupling the entire $G(z)$ as done in Section \ref{sec: drg-tf}, only a square subsystem of $G$ is decoupled. 
Without loss of generality, we assume that the square subsystem corresponds to the first $m$ outputs of $G$, but the method can be applied to other square subsystems as well. Let us denote the $m \times m$ square subsystem of $G$ as $G_m$. 
Same as DRG-tf for square systems (see Section \ref{sec:diag}),  $F$ is designed to decouple $G_m$, resulting in the diagonal subsystem, $W_m$,  shown below:
\begin{equation}\label{eq:Wm_nonsquare_large_y}
W_m= \begin{bmatrix}
G_{m_{11}}(z) & \ldots & 0\\
      \vdots & \ddots & \vdots \\
      0(z) & \ldots & G_{m_{mm}}(z)
      \end{bmatrix}
\end{equation}
Then, the whole system  $W$ (i.e., $GF$) can be described by:
\begin{equation}\label{eq:Wz_nonsquare_large_y}
W= \begin{bmatrix}
\begin{array}{c}
W_m\\
      \hdashline \\
     \underbrace{ FG_{m+1,p}}_{W_p} 
     \end{array}
     \end{bmatrix}
\end{equation}
where  $G_{m+1,p}$ represents the last $(p-m)$ rows of $G$.

As can be seen from Figure \ref{fig: DRG-tf non-square large y}, we  design one DRG (which contains $m$ decoupled SRGs) for $W_m$ to ensure that the outputs  $y_{1},\ldots,y_{m}$  satisfy the constraints. Then, we design a single SRG for $W_p$ to make sure that the outputs $y_{m+1},\ldots,y_{p}$ satisfy the constraints. 
The challenge is that two sets of $v$'s are computed: one by the DRG and one by the SRG (as shown in Figure \ref{fig: DRG-tf non-square large y}).  Thus, the question is, how can the two sets of $v$'s be ``fused" together while satisfying the constraints on all outputs. There are several ways to accomplish this task. The easiest solution is to select  the smallest $\kappa$ among the $m+1$  different $\kappa$'s ($\kappa$ is calculated based on \eqref{eq: LP for RG}), denoted as $\bar{\kappa}$, that is:
\begin{equation}\label{eq: minikappa}
\bar{\kappa}=\textbf{min}(\kappa_1,\ldots,\kappa_{m+1})
\end{equation}
and the update law for $v$ becomes: 
$$
v(t+1)=v(t)+\bar{\kappa}(r'(t)-v(t)).
$$
With the above  $\bar{\kappa}$, the convexity of the maximal admissible sets (MAS)  guarantees that the constraints for all outputs are satisfied and the solutions from the DRG and SRG are unified. However, the response of this approach may be conservative since the smallest $\kappa$ is chosen. An alternative way to fuse the $v$'s is as follows. 
First, denote the set of $v$'s given by the SRG (see Figure \ref{fig: DRG-tf non-square large y}) as $v_s$ and the set of $v$'s given by the DRG as $v_d$. We solve an RG-like LP (see \eqref{eq: LP for RG}) to find the point in $O_\infty^{W_{m}}$ that is  closest to $v_{s}$ (recall that $O_\infty^{W_{m}}$ refers to the MAS for $W_{m}$), denoted as $v_{t_1}$.  Similarly, we  solve another LP to find the closest point to $v_d$ in $O_\infty^{W_p}$, where $O_\infty^{W_p}$ represents the MAS for $W_p$, denoted as $v_{t_2}$.  
Note that  $v_{t_1}$ and $v_{t_2}$ are both constraint-admissible for all outputs since they are in $O_\infty^{W_p}$ and $O_\infty^{W_m}$ at the same time.

 Finally, we choose the actual set of $v$'s that is applied to $F(z)$ as:
$$
v=
    \begin{cases}
      v_{t_1} & \text{if $\|r'-v_{t_1}\| \leq \|r'-v_{t_2}\|$}\\
      v_{t_2} & \text{otherwise}\\
    \end{cases}  
$$
By choosing $v$ as above, it is guaranteed that the constraints for all outputs are satisfied. However, computational burden of this approach is higher than standard DRG since two more LPs are required. Finally, $F^{-1}$ is introduced to ensure that $u$ is close to $r$, as before.

\section{Conclusion}\label{Sec: Conclusions}

 In this work, a method for constraint management of coupled  MIMO systems was studied. The method is referred to as the Decoupled Reference Governor (DRG) and is based on decoupling the input-output dynamics, followed by application of scalar reference governors to each decoupled channel. We presented the DRG formulation with two different decoupling techniques based on transfer functions and state-space, and demonstrated the applicability of the method as a function of the singular values of the system and the decoupling matrix. Finally, we presented steady-state and transient analyses of the DRG and compared the computation time of DRG with VRG. It was shown that DRG can run faster than VRG by two orders of magnitude. Unknown disturbances and parametric uncertainties were also addressed.


Future work will explore modifications to DRG to ensure that the inputs to the closed-loop system (i.e., $u$ in Figure~\ref{fig:Decoupled with RG}) remain below the references (i.e., $r$). We will also explore DRG formulations that have the ability to recover from constraint violation, should unknown disturbances or observer errors push the system outside of the maximal admissible sets. 

\medskip

\printbibliography

@INPROCEEDINGS{Osorio_2018,
  author={J. {Osorio} and H. R. {Ossareh}},
  booktitle={2018 IEEE Conference on Control Technology and Applications (CCTA)}, 
  title={A Stochastic Approach to Maximal Output Admissible Sets and Reference Governors}, 
  year={2018},
  volume={},
  number={},
  pages={704-709},}

@inproceedings{Osorio_2019,
  author    = {Joycer Osorio and Mario Santillo and Julia Buckland and Mrdjan Jankovic, and
              Hamid R. Ossareh},
  title     = {A Reference Governor Approach towards Recovery from Constraint Violation},
  booktitle = {American Control Conference, {ACC} 2019,
              Philadelphia, USA, July 10-12, 2019},
  pages     = {},
  year      = {2019},
}

@article{kalabic_2015,
  title={Reference governors: Theoretical Extensions and Practical Applications.},
  author={Kalabic, Uros},
  year={2015}
}

@article{Kolmanovsky_1998,
  title={Theory and computation of disturbance invariant sets for discrete-time linear systems},
  author={Kolmanovsky, Ilya and Gilbert, Elmer G},
  journal={Mathematical problems in engineering},
  volume={4},
  number={4},
  pages={317--367},
  year={1998},
  publisher={Hindawi Publishing Corporation}
}

@article {Gilbert_1999,
author = {Gilbert, Elmer G. and Kolmanovsky, Ilya},
title = {Fast reference governors for systems with state and control constraints and disturbance inputs},
journal = {International Journal of Robust and Nonlinear Control},
volume = {9},
number = {15},
publisher = {John Wiley & Sons, Ltd.},
issn = {1099-1239},
pages = {1117--1141},
keywords = {reference governors, constrained control, disturbance inputs, nonlinear control, discrete time},
year = {1999},
}

@article{falb1967decoupling,
  journal={IEEE Transactions on Automatic Control}, 
  title={Decoupling in the design and synthesis of multivariable control systems},
  author={Falb, Peter L and Wolovich, William A},
  year={1967}
}

@article{Garelli_2006,
  title={Limiting interactions in decentralized control of MIMO systems},
  author={Garelli, F and Mantz, RJ and De Battista, H},
  journal={Journal of Process Control},
  volume={16},
  number={5},
  pages={473--483},
  year={2006},
  publisher={Elsevier}
}

@ARTICLE{Camponogara_2002, 
author={E. Camponogara and D. Jia and B. H. Krogh and S. Talukdar}, 
journal={IEEE Control Systems}, 
title={Distributed model predictive control}, 
year={2002}, 
volume={22}, 
number={1}, 
pages={44-52}, 
keywords={distributed control;large-scale systems;multi-agent systems;power system control;predictive control;stability;closed-loop system;distributed model predictive control;large-scale problems;multiple agents;optimization computations;power system control;stability;Centralized control;Control systems;Distributed computing;Distributed control;Open loop systems;Optimal control;Power system dynamics;Power system modeling;Predictive control;Predictive models}, 
doi={10.1109/37.980246}, 
ISSN={1066-033X}, 
month={Feb},}

@INPROCEEDINGS{Wang_2003,
  author={ {Wenlin Wang} and D. E. {Rivera} and K. G. {Kempf}},
  booktitle={Proceedings of the 2003 American Control Conference, 2003.}, 
  title={Centralized model predictive control strategies for inventory management in semiconductor manufacturing supply chains}, 
  year={2003},
  volume={1},
  number={},
  pages={585-590 vol.1},}

@article{Elliott_2013,
  title={Decentralized model predictive control of a multi-evaporator air conditioning system},
  author={Elliott, Matthew S and Rasmussen, Bryan P},
  journal={Control Engineering Practice},
  volume={21},
  number={12},
  pages={1665--1677},
  year={2013},
  publisher={Elsevier}
}

@article{Bemporad_2002,
  title={Model predictive control based on linear programming\~{} the explicit solution},
  author={Bemporad, Alberto and Borrelli, Francesco and Morari, Manfred and others},
  journal={IEEE Transactions on Automatic Control},
  volume={47},
  number={12},
  pages={1974--1985},
  year={2002}
}

@book{Burl_1998,
  title={Linear optimal control: H (2) and H (Infinity) methods},
  author={Burl, Jeff B},
  year={1998},
  publisher={Addison-Wesley Longman Publishing Co., Inc.}
}

@ARTICLE{Ge_2014, 
  author={S. S. {Ge} and Z. {Li}},
  journal={IEEE Transactions on Automatic Control}, 
  title={Robust Adaptive Control for a Class of MIMO Nonlinear Systems by State and Output Feedback}, 
  year={2014},
  volume={59},
  number={6},
  pages={1624-1629},}

@inproceedings{Herceg_2013,
author = {Herceg, M and Kvasnica, M and Jones, C and Morari, M},
booktitle = {Proc. of the European Control Conference},
title = {{Multi-Parametric Toolbox 3.0}},
year = {2013}
}

@book{Zhou_1996,
  title={Robust and optimal control},
  author={Zhou, Kemin and Doyle, John Comstock and Glover, Keith and others},
  volume={40},
  year={1996},
  publisher={Prentice hall New Jersey}
}

@article{Scattolini_2009,
  title={Architectures for distributed and hierarchical model predictive control--a review},
  author={Scattolini, Riccardo},
  journal={Journal of process control},
  volume={19},
  number={5},
  pages={723--731},
  year={2009},
  publisher={Elsevier}
}

@book{Aastrom_1995,
  title={PID controllers: theory, design, and tuning},
  author={{\AA}str{\"o}m, Karl Johan and H{\"a}gglund, Tore},
  volume={2},
  year={1995},
  publisher={Instrument society of America Research Triangle Park, NC}
}

@incollection{Macfarlane_1983,
  title={A quasi-classical approach to multivariable feedback systems design},
  author={MacFarlane, AGJ and Hung, YS},
  booktitle={Computer Aided Design of Multivariable Technological Systems},
  pages={43--52},
  year={1983},
  publisher={Elsevier}
}

@ARTICLE{MacFarlane_1970,
  author={A. G. J. {MacFarlane}},
  journal={Electronics Letters}, 
  title={Commutative controller: a new technique for the design of multivariable control systems}, 
  year={1970},
  volume={6},
  number={5},
  pages={121-123},}

@book{Skogestad_2007,
%   title={Multivariable feedback control: analysis and design},
%   author={Skogestad, Sigurd and Postlethwaite, Ian},
%   volume={2},
%   year={2007},
%   publisher={Wiley New York}
% }

@INPROCEEDINGS{Ilya_1995,
  author={I. {Kolmanovsky} and E. G. {Gilbert}},
  booktitle={Proceedings of 1995 American Control Conference - ACC'95}, 
  title={Maximal output admissible sets for discrete-time systems with disturbance inputs}, 
  year={1995},
  volume={3},
  number={},
  pages={1995-1999 vol.3},}

@INPROCEEDINGS{Gilbert_1995,
  title={Discrete-time reference governors for systems with state and control constraints and disturbance inputs},
  author={Gilbert, Elmer G and Kolmanovsky, Ilya},
  booktitle={Decision and Control, 1995., Proceedings of the 34th IEEE Conference on},
  volume={2},
  pages={1189--1194},
  year={1995},
  organization={IEEE}
}

@inproceedings{Pluymers_2005,
  title={The efficient computation of polyhedral invariant sets for linear systems with polytopic uncertainty},
  author={Pluymers, B and Rossiter, JA and Suykens, JAK and De Moor, Bart},
  booktitle={Proceedings of the 2005, American Control Conference, 2005.},
  pages={804--809},
  year={2005},
  organization={IEEE}
}

@article{GARONE2017306,
  title={Reference and command governors for systems with constraints: A survey on theory and applications},
  author={Garone, Emanuele and Di Cairano, Stefano and Kolmanovsky, Ilya},
  journal={Automatica},
  volume={75},
  pages={306--328},
  year={2017},
  publisher={Elsevier}
}

@INPROCEEDINGS{Shah_2011,
  author={G. {Shah} and S. {Engell}},
  booktitle={Proceedings of the 2011 American Control Conference}, 
  title={Tuning MPC for desired closed-loop performance for MIMO systems}, 
  year={2011},
  volume={},
  number={},
  pages={4404-4409},}

@ARTICLE{Gilbert_1991,
  author={E. G. {Gilbert} and K. T. {Tan}},
  journal={IEEE Transactions on Automatic Control}, 
  title={Linear systems with state and control constraints: the theory and application of maximal output admissible sets}, 
  year={1991},
  volume={36},
  number={9},
  pages={1008-1020},}

@inproceedings{Kolmanovsky_2014,
  title={Reference and command governors: A tutorial on their theory and automotive applications},
  author={Kolmanovsky, Ilya and Garone, Emanuele and Di Cairano, Stefano},
  booktitle={American Control Conference (ACC), 2014},
  pages={226--241},
  year={2014},
  organization={IEEE}
}

@INPROCEEDINGS{DRG2018,
  author={Y. {Liu} and J. {Osorio} and H. {Ossareh}},
  booktitle={2018 IEEE Conference on Decision and Control (CDC)}, 
  title={Decoupled Reference Governors for Multi-Input Multi-Output Systems}, 
  year={2018},
  volume={},
  number={},
  pages={1839-1846},}

@phdthesis{kerrigan2001robust,
  title={Robust constraint satisfaction: Invariant sets and predictive control},
  author={Kerrigan, Eric Colin},
  year={2001},
  school={University of Cambridge}
}

@article{mcdonald1991ℓ1,
  title={L1-optimal control of multivariable systems with output norm constraints},
  author={McDonald, JS and Pearson, JB},
  journal={Automatica},
  volume={27},
  number={2},
  pages={317--329},
  year={1991},
  publisher={Elsevier}
}

@article{tee2009barrier,
  title={Barrier Lyapunov functions for the control of output-constrained nonlinear systems},
  author={Tee, Keng Peng and Ge, Shuzhi Sam and Tay, Eng Hock},
  journal={Automatica},
  volume={45},
  number={4},
  pages={918--927},
  year={2009},
  publisher={Elsevier}
}

@article{scokaert1998constrained,
  title={Constrained linear quadratic regulation},
  author={Scokaert, Pierre OM and Rawlings, James B},
  year={1998},
  journal={IEEE Transactions on Automatic Control},
  publisher={IEEE, Piscataway, NJ, USA}
}

@article{lloyd1970decoupling,
  title={Decoupling a multivariable discrete-time system},
  author={Lloyd, S},
  journal={Electronics Letters},
  volume={6},
  number={26},
  pages={831},
  year={1970},
  publisher={IET}
}

@article{silverman1970decoupling,
  title={Decoupling with state feedback and precompensation},
  author={Silverman, L},
  journal={IEEE Transactions on Automatic Control},
  volume={15},
  number={4},
  pages={487--489},
  year={1970},
  publisher={IEEE}
}

@article{nonlinearchen,
  title={Stability of nonlinear systems},
  author={Chen, Guanrong},
  journal={Encyclopedia of RF and Microwave Engineering},
  pages={4881--4896},
  year={2004},
  publisher={New York, USA: Wiley}
}

@book{harris1983stability,
  title={The stability of input-output dynamical systems},
  author={Harris, Christopher John and Valenca, JME},
  volume={168}
}

@article{lu2002inverses,
  title={Inverses of 2x2 block matrices},
  author={Lu, Tzon-Tzer and Shiou, Sheng-Hua},
  journal={Computers and Mathematics with Applications},
  volume={43},
  number={1-2},
  pages={119--129},
  year={2002},
  publisher={Elsevier}
}

@article{tondel2003algorithm,
  title={An algorithm for multi-parametric quadratic programming and explicit MPC solutions},
  author={T{\o}Ndel, Petter and Johansen, Tor Arne and Bemporad, Alberto},
  journal={Automatica},
  volume={39},
  number={3},
  pages={489--497},
  year={2003},
  publisher={Elsevier}
}
\end{document}